\documentclass[journal]{IEEEtran}
\usepackage{latexsym}
\usepackage{graphicx}
\usepackage{float}
\usepackage{graphicx,psfrag,amsmath,amsfonts,verbatim}
\usepackage{amssymb}
\usepackage[small]{caption}
\usepackage{indentfirst}
\usepackage{bm,bbding}
\usepackage[ruled,vlined,linesnumbered]{algorithm2e}
\usepackage{cite}
\usepackage[T1]{fontenc}
\usepackage[utf8]{inputenc}
\usepackage{authblk}
\usepackage{setspace}
\usepackage{fancyhdr}
\usepackage{multirow}
\usepackage{textcomp}
\newtheorem{theorem}{Theorem}
\newtheorem{lemma}{Lemma}

\def\BibTeX{{\rm B\kern-.05em{\sc i\kern-.025em b}\kern-.08em
    T\kern-.1667em\lower.7ex\hbox{E}\kern-.125emX}}
\begin{document}
\bibliographystyle{IEEEtran}
\allowdisplaybreaks[4]
\title{Resource Allocation for STAR-RIS-enhanced Metaverse Systems with Augmented Reality}
\author{Sun~Mao,
	Lei Liu,
	Kun Yang, 
	F. Richard Yu,
	Duist Niyato,
	and Chau Yuen
%Lei Liu, %\IEEEmembership{Member,~IEEE},
%Jie Hu, \IEEEmembership{Senior Member,~IEEE},
%Kun Yang, \IEEEmembership{Fellow,~IEEE},
%F. Richard Yu, \IEEEmembership{Fellow,~IEEE},
%and Duist Niyato, \IEEEmembership{Fellow,~IEEE}

\IEEEcompsocitemizethanks{This paper was funded in part by Natural Science Foundation of Sichuan Province under Grants 2025ZNSFSC0499 and 2025ZNSFSC0501, in part by Jiangsu Major Project on Fundamental Research under Grant BK20243059, in part by Gusu Innovation Project under Grant ZXL2024360, in part by High-Tech District of Suzhou City under Grant RC2025001, in part by Natural Science Foundation of China under Grants 62241108, 62132004, and 62531008, in part by MOE (Ministry of Education, Singapore), under MOE Tier 2 Award number T2EP50124-0032.\\
\mbox{~~~}Sun Mao is with Sichuan Internet College of Sichuan Normal University, Chengdu, China (e-mail: sunmao@sicnu.edu.cn).\\
\mbox{~~~}Lei Liu is with the State Key Laboratory of Integrated Service Networks, Xidian University, Xi’an 710071, China. (e-mail: tianjiaoliulei@163.com).\\
\mbox{~~~}Kun Yang  is with the State Key Laboratory of Novel Software Technology, Nanjing University, Nanjing 210008, China, and NJU Institute of Intelligent Networks and Communications (NINE), School of Intelligent Software and Engineering, Nanjing University (Suzhou Campus), Suzhou, 215163, China (e-mail: kunyang@nju.edu.cn).\\
\mbox{~~~}F. Richard Yu is with the School of Information Technology, Carleton University, Ottawa, ON K1S 5B6, Canada. (e-mail: richard.yu@ieee.org).\\
\mbox{~~~}Duist Niyato is with School of Computer Science and Engineering, Nanyang Technological University, 639798, Singapore (e-mail: dniyato@ntu.edu.sg).\\
\mbox{~~~}Chau Yuen is with the School of Electrical and Electronic Engineering, Nanyang Technological University, 639798, Singapore (e-mail: chau.yuen@ntu.edu.sg).
 }}
\maketitle

\begin{abstract}
Augmented reality (AR)-enabled Metaverse is a promising technique to provide immersive service experience for mobile users. However, the limited network resources and unpredictable wireless propagation environments are key design bottlenecks of AR-enabled Metaverse systems. Therefore, this paper presents a resource management framework for simultaneously transmitting and reflecting RIS (STAR-RIS)-assisted AR-enabled Metaverse, where the STAR-RIS is configured to improve the communication efficiency between AR users and the Metaverse server located at the base station (BS). Moreover, we formulate a service latency minimization problem via jointly optimizing the computation resource allocation of the BS, coefficient matrix of the STAR-RIS, central processing unit (CPU) frequency and transmit power of the AR users. To tackle the non-convex problem, we utilize an approximate method to transform it to a tractable form, and decouple the multi-dimensional variables via the alternating optimization method. Particularly, the optimal coefficient matrix is obtained by a penalty function-based method with proved convergence, the CPU frequencies of AR users are derived as the closed-form solution, and the transmit power of AR users and computation resource allocation of the BS are obtained by the Lagrange duality method and convex optimization theory. Finally, simulation results demonstrates that the proposed method achieves remarkable latency reduction than several benchmark methods.
\end{abstract}
\begin{IEEEkeywords}
Resource management, augmented reality, reconfigurable intelligent surface, Metaverse.
\end{IEEEkeywords}
\IEEEpeerreviewmaketitle
\section{Introduction}
In recent years, Metaverse is envisioned as an innovative technique to transform the way people live and work, which can be applied in multiple domains, such as intelligent transportation, interactive game, industry automation, online education, smart cities, etc \cite{10070414}. As an important technique for Metaverse, augmented reality (AR) technique is able to provide immersive service experience to mobile users by blending physical and virtual reality \cite{de2020survey}.

In AR-enabled Metaverse applications, object detection is an important function to improve the quality of experience of mobile users \cite{liu2019edge}. Imagine a vehicular network that the real-time object detection on the road can improve the driving safety of vehicles, especially in the extreme weather and lighting conditions. In general, the object detection is achieved by utilizing intelligent machine learning frameworks with heavy computation requirements. However, it is hard to satisfy the strict computation requirements using the local computing capability of resource-constrained mobile users. Therefore, mobile-edge computing (MEC) is envisioned as a promising technique to address this problem, where the AR users can transmit the collected image/video to the BS integrated with edge Metaverse servers, which can perform the object detection and return the corresponding results to AR users \cite{8016573}.

Nevertheless, the communication efficiency between AR users and BS cannot be always ensured due to stochastic wireless channels and user mobility. In recent years, reconfigurable intelligent surface (RIS) is proposed to reconstruct the wireless propagation environments via establishing the reliable cascade links between the BS and AR users \cite{9136592,9410457,9500188}. Although the benefits brought by conventional reflecting-only RIS, it only can achieve 180-degree coverage, and the communication performance of AR users located at the opposite side of RIS cannot be enhanced. With the advancement of information metamaterial, the simultaneous transmitting and reflecting RIS (STAR-RIS) is envisioned as a promising solution to address this problem, since it can reflect and transmit the incident signals simultaneously via configuring appropriate transmit/reflect coefficient matrix \cite{9690478,9754364}.

Nowadays, the resource allocation for AR-empowered Metaverse systems has been investigated in \cite{7906521,9319727,feng2022resource,zhou2022resource}. Moreover, the authors in \cite{9631876,9197675,9786814} and \cite{9570143,9740451,9854887,10093070} utilized reflecting-only RIS and STAR-RIS to improve the communication performance (such as total throughput, energy efficiency, etc.), respectively. Meanwhile, existing literature in \cite{9388935,10306290,10032506,10121446} proposed joint computation offloading and RIS configuration optimization methods for RIS-aided MEC networks. For ease of reference, the related literature is summarized in Table I. As observed, although the research efforts brought by existing work, several problems are still unsolved as follows.
\begin{table*}[htbp]
\footnotesize
\caption{Synopsis of relevant research works}
\centering
\begin{tabular}{|lllllll|}
\hline Ref. &Optimization target & Metaverse &MEC &RIS &Communication optimization &Computation optimization \\
\hline \cite{7906521,9319727} &Energy consumption &\Checkmark &\Checkmark &\XSolidBrush &\Checkmark &\Checkmark\\
\hline \cite{feng2022resource} &System utility &\Checkmark &\Checkmark &\XSolidBrush &\Checkmark &\Checkmark\\
\hline \cite{zhou2022resource} &Service latency &\Checkmark &\Checkmark &\XSolidBrush &\Checkmark &\Checkmark\\
\hline \cite{9631876,9197675,9786814} & Communication performance &\XSolidBrush &\XSolidBrush &Reflecting-only RIS &\Checkmark &\XSolidBrush \\
\hline \cite{9570143,9740451,9854887,10093070} & Communication performance &\XSolidBrush &\XSolidBrush &STAR-RIS &\Checkmark &\XSolidBrush \\
\hline \cite{9388935} & Energy consumption &\XSolidBrush &\Checkmark &Reflecting-only RIS &\Checkmark &\Checkmark \\
\hline \cite{10306290} & Total computation bits &\XSolidBrush &\Checkmark &Reflecting-only RIS &\Checkmark &\Checkmark \\
\hline \cite{10032506,10121446} & Total computation bits  &\XSolidBrush &\Checkmark &STAR-RIS &\Checkmark &\Checkmark \\
\hline This article &Min-max service latency &\Checkmark &\Checkmark &STAR-RIS &\Checkmark &\Checkmark\\
\hline
\end{tabular}
\end{table*}
\begin{itemize}
\item To the best of our knowledge, the integration of STAR-RIS into AR-empowered Metaverse systems has not been investigated in existing literature. In AR-empowered Metaverse systems, AR users need to communicate frequently with Metaverse servers located at the BS via stochastic wireless channels, for enriching the virtual scenarios and improving the service experience of  Metaverse systems. Benefited from the ability of reconstructing wireless propagation environments, STAR-RIS exhibits great potential to improve the communication performance between AR users and the BS. Different from existing literature focusing on system communication performance or computation capacity, the user-centric performance requirements and random user distribution incur new and great design challenges on the optimal configuration of STAR-RIS in AR-empowered Metaverse systems, especially considering the tightly coupled relationship between transmit and reflect coefficient matrix of STAR-RIS.
\item In AR-empowered Metaverse systems, real-time object detection is regarded as an important function to improve the quality of experience (QoE) of AR users. In general, the service latency of object detection is composed by local processing delay, data transmission delay, edge execution and result delivering delay, which is recognized as an important performance metric to evaluate the QoE of AR users. Obviously, the service latency should be controlled rigidly to avoid discomfort of AR users (e.g., motion sickness), and to reduce the possibility of terrible accidents in delay-sensitive scenarios, such as Internet of Vehicles (IoVs). Therefore, it is essential to model and optimize the object detection latency in AR-empowered Metaverse systems, which is seldom investigated in existing literature.
\item In STAR-RIS-enhanced AR-empowered Metaverse systems, the object detection is involved with the process of communication and computing. Therefore, in order to meet the strict delay requirements, it is important to jointly optimize multi-dimensional network resources and reflect/transmit coefficient matrix of STAR-RIS. Nevertheless, due to highly coupled optimization variables, it is extremely difficult to design joint reflect/transmit beamforming and resource management algorithms.
\end{itemize}

Motivated by these observations, this paper focuses on optimizing the coefficient matrices of STAR-RIS and multi-dimensional network resources to reduce the service latency of AR-empowered Metaverse systems. In this article, we consider a typical STAR-RIS-assisted AR-enhanced Metaverse system, which includes a BS integrated with edge Metaverse server, a STAR-RIS, and multiple AR users. The main contribution of this article is summarized as follows.
\begin{itemize}
\item This paper presents a novel design framework of STAR-RIS-assisted AR-enabled Metaverse, where AR users transmit the collected environmental data to the edge Metaverse server located at BS for enriching the virtual scenarios, and meanwhile AR users can enjoy diverse applications provided by Metaverse. In this framework, the Metaverse concentrates on an application of object detection, where the AR users first transmit image/video to the BS, and then the BS performs object detection and return the result to AR users. Moreover, both the spatial division multiple access (SDMA) and frequency division multiple access (FDMA) are exploited as multiple-access protocols. Meanwhile, the transmit/reflect coefficient matrix of STAR-RIS can be optimally configured to enhance the communication performance between the BS and AR users.
\item To reduce the object detection latency, we first model the service latency including image conversion time at AR users, data transmission delay, and edge execution and result downloading delay. Then, we formulate a service latency minimization problem for both SDMA-based and FDMA-based Metaverse systems, while satisfying wireless resource restrictions and user energy constraints, through jointly optimizing local CPU-cycle frequency, bandwidth allocation ratio and transmit power of AR users, transmit/reflect coefficient matrix of the STAR-RIS, and computation resource assignment of the BS.
\item With the tightly coupled optimization variables, the formulated problems are strictly non-convex. To tackle it, we first present an approximate method to rewrite the non-convex problem to a tractable form. Then, we design an alternating optimization method to decouple optimization variables. In particular, the penalty function-based method is developed to obtain the optimal transmit/reflect coefficient matrix of STAR-RIS. Moreover, the local computing frequencies of AR users are derived as the closed-form solution. The Lagrange duality method and convex optimization technique is adopted to obtain the optimal computation resource allocation of BS, bandwidth allocation ratio and transmit power of AR users. Finally, convergence and computational complexity of proposed algorithms are analyzed.
\item Theoretic analysis reveals that each AR user enjoys the same service latency to minimize the maximum service latency of AR users; the edge Metaverse server located at the BS adopts the maximum computing capability to reduce service latency of AR users. Meanwhile, extensive numerical results show that the deployment and configuration of STAR-RIS has great potential to reduce the service latency of AR users, especially in the scenario with restricted communication resources and heavy transmission requirements.
\end{itemize}

The rest of this paper is organized as follows. Section II illustrates the system model and the formulated service latency minimization problem. Section III proposes the joint coefficient configuration and resource scheduling algorithm for solving the formulated non-convex optimization problem. Section IV extends this work to FDMA-based Metaverse systems. Section V presents simulation results to demonstrate the superiority of proposed method. This article is concluded in Section VI.
\section{System Model}
As shown in Fig. 1, this paper considers a typical STAR-RIS-assisted AR-enabled Metaverse system including a BS, a STAR-RIS with $N$ elements, and $K$ AR users \footnote{For ease of analysis, this paper investigates the optimal resource management and transmit/reflect configuration optimization for a single STAR-RIS-aided AR-empowered Metaverse system, and the proposed method will provide meaningful guidance for the optimal design of the scenario with multiple STAR-RISs.}. The sets of transmit/reflect elements and AR users are denoted as $\mathcal{N}=\{1,2,\cdots,N\}$ and $\mathcal{K}=\{1,2,\cdots,K\}$, respectively. The Metaverse service is provided by the Metaverse server located at the BS. In addition, it is assumed that the AR-enabled Metaverse concentrates on AR applications for object detection. The AR user first transmits the image/video obtained by on-board camera to the Metaverse system. After receiving the image/video from AR users, the object detection is conducted at the Metaverse server, and then the result is returned to the AR users and rendered on the AR screen. In particular, the STAR-RIS is equipped to improve the transmission rate between the BS and AR users, through utilizing its advantage to reconstruct wireless propagation environments.
%It should be noted that AR users transmit their collected environment information to Metaverse servers for enriching the scene of Metaverse, and they also fully utilize the object detection service provided by Metaverse systems for obtaining a more immersive user experience.
\begin{figure}[htbp]
\begin{center}
\includegraphics[width=0.5\textwidth]{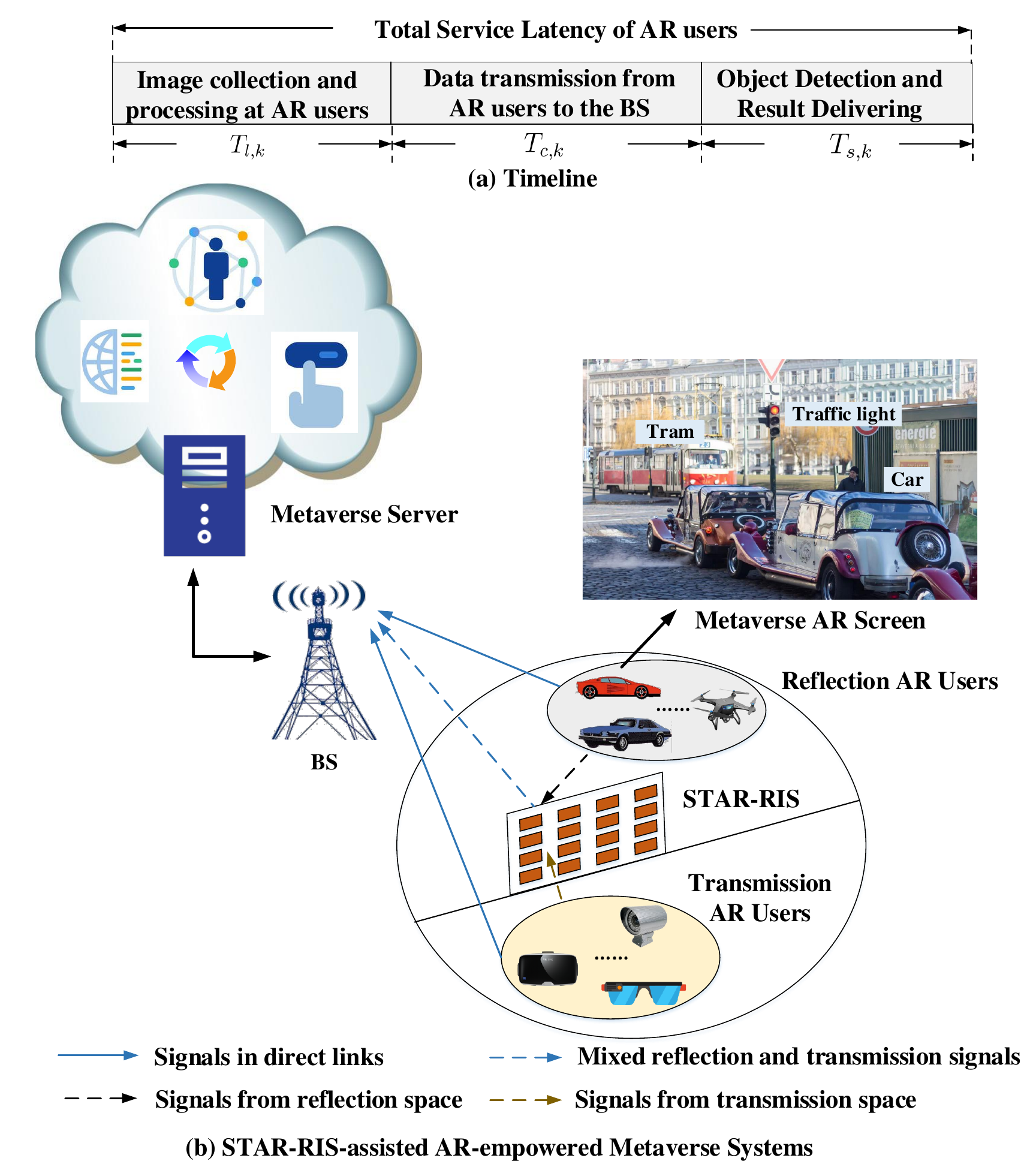}
\caption{System model.}
\end{center}
\end{figure}

Notice that the Metaverse systems generally include the mobile users requiring Metaverse services in the downlink. Nevertheless, this paper concentrates on how the Metaverse servers and AR users together build Metaverse services, i.e., AR users submit captured environmental information to Metaverse servers for enriching the scene of Metaverse, while they enjoy the object detection services provided by Metaverse systems for obtaining an intelligent and immersive experience. Involving the requesting users in the downlink may be an interesting future direction. Moreover, software defined network (SDN) technique can be adopted to achieve efficient network management \cite{9583902,9491087}. The workflow of SDN-based Metaverse systems is briefly summarized as follows.
\begin{itemize}
\item At the beginning of each time slot, the BS obtains the service requests of AR users and perfect channel state information of all links \footnote{In general, it is difficult to perform channel estimation for STAR-RIS-aided wireless communication systems considering the passive feature of STAR-RIS. In order to fulfill the potential of STAR-RIS, similar to \cite{10032506,10121446}, it is assumed that the perfect CSI can be obtained by utilizing existing channel estimation methods \cite{9130088,9505267,9087848,9839429}.} .
\item After receiving the global network information, the SDN controller at the BS updates the local information table, and designs the corresponding STAR-RIS configuration optimization and resource management strategy to minimize the service latency of AR users.
\item According to the scheduling strategy, the ARs users, STAR-RIS and BS adjust the relevant transmit power, CPU-cycle frequency, transmit/reflect coefficient matrix, and computation resource allocation.
\end{itemize}

It is worth noting that the considered scenario can be found in practical systems. For instance, the proposed framework can be utilized in Internet of Vehicles for improving the driving experience, where AR-empowered vehicles can enjoy the object detection services provided by Metaverse systems, and meanwhile it can also enrich the virtual scenarios by transmitting the collected road information to Metaverse servers. For ease of reference, the main symbols used in this article are summarized in Table II.

\begin{table}[htbp]
\footnotesize
\caption{Summary of major notations}
\centering
\begin{tabular}{|l|l|}
\hline Notation & Description\\
\hline $K$ & Number of AR users\\
\hline $K_t$/$K_r$ & Number of reflection/transmission AR users\\
\hline $N$ & Number of STAR-RIS elements\\
\hline $f_{k,\text{max}}$ & Maximum CPU-cycle frequency of $k$-th AR user \\
\hline $w_k$ & Local computation workload of $k$-th AR user\\
\hline $f_{l.k}$ & Local CPU-cycle frequency of $k$-th AR user\\
\hline $T_{l,k}$ & Local computing latency of $k$-th AR user\\
\hline $E_{l,k}$ & Energy consumption for local computing of $k$-th AR user\\
\hline $p_{k,\text{max}}$ & Maximum transmission power of $k$-th AR user\\
\hline $p_k$ & Transmission power of $k$-th AR user\\
\hline $B$ & Total transmission bandwidth\\
\hline $R_k$/$R_k^{\text{FDMA}}$  & Transmission rate of $k$-th AR user\\
\hline $\delta^2$ & Gaussian noise power spectral density \\
\hline $h_{d,k}$ & Channel coefficient between BS and $k$-th AR user\\
\hline $\mathbf{h}_r$ & Channel coefficient between BS and STAR-RIS\\
\hline $\mathbf{h}_{I,k}$ & Channel coefficient between BS and $k$-th AR user\\
\hline $\Gamma_r$/~$\Gamma_t$ & Reflection/Transmission coefficient matrix of STAR-RIS\\
\hline $\gamma_{r,n}$/~$\gamma_{t,n}$ & Amplitude of $n$-th element on STAR-RIS for\\
& reflection/transmission\\
\hline $\theta_{r,n}$/~$\theta_{r,n}$ & Phase of $n$-th element on STAR-RIS for \\
& reflection/transmission\\
\hline $T_{c,k}$ & Transmission delay of $k$-th AR user \\
\hline $d_k$ & Computation model size\\
\hline $E_{c,k}$ & Communication energy consumption of $k$-th AR user\\
\hline $E_k$ & Total energy consumption of $k$-th AR user\\
\hline $c_k$ & Number of CPU cycles required for processing 1-bit data\\
\hline $T_{s,k}$ & Edge processing latency for analyzing data from $k$-th \\
&AR user\\
\hline $f_{s,k}$ & Computation resource of BS allocated to $k$-th AR user\\
\hline $F$ & Maximum CPU-cycle frequency of BS\\
\hline $b_k$ & Bandwidth allocation ratio of $k$-th AR user \\
\hline
\end{tabular}
\end{table}
\subsection{Local Processing Model}
Since the images obtained by on-board cameras are generally YUV format, so the AR users need to execute the pre-processing task for converting YUV to RGB \cite{yang2007fast}. In general, the CPU of AR users is utilized to perform the format conversion, and the dynamic voltage and frequency scaling (DVFS) is adopted to reduce power consumption by adjusting clock frequency and supply voltage \cite{eyerman2011fine}. Defining $w_k$ as the computation workload of $k$-th AR user, the processing latency at $k$-th AR user is given by $T_{l,k}=\frac{w_k}{f_{l,k}}$, where $f_{l,k}$ represents the CPU-cycle frequency of $k$-th AR user. Moreover, the corresponding computation energy consumption of $k$-th AR user can be expressed as $E_{l,k}=\kappa f_{l,k}^3T_{l,k}=\kappa f_{l,k}^2w_k$, where $\kappa$ is a constant related to the circuit architecture.
\subsection{STAR-RIS-aided Communication Model}
After accomplishing local format conversion, AR users need to transmit the obtained image/video to Metaverse server with the aid of STAR-RIS. In the image transmission stage, the energy splitting protocol is utilized in STAR-RIS, and the spatial division multiple access (SDMA) is adopted to achieve multi-user communications. As illustrated in Fig. 1, according to the location of STAR-RIS, the AR users located at the same side of the BS are called reflection AR users, and the other users are called transmission AR users. Defining $K_t$ and $K_r$ as the number of transmission AR users and reflection AR users, respectively. The coefficient matrices of STAR-RIS for reflection and transmission modes are denoted by $\Gamma_r=\text{diag}\{\sqrt{\gamma_{r,1}} e^{j\theta_{r,1}},\sqrt{\gamma_{r,2}}e^{j\theta_{r,2}},\cdots,\sqrt{\gamma_{r,N}}e^{j\theta_{r,N}}\}$ and $\Gamma_t=\text{diag}\{\sqrt{\gamma_{t,1}}e^{j\theta_{t,1}},\sqrt{\gamma_{t,2}}e^{j\theta_{t,2}},\cdots,\sqrt{\gamma_{t,N}}e^{j\theta_{t,N}}\}$, respectively, where $\gamma_{r,n}$ and $\gamma_{t,n}$ stand for the amplitude of $n$-th element for reflection and transmission, respectively, $\theta_{r,n}$ and $\theta_{t,n}$ represent the phase of $n$-th element for reflection and transmission, respectively. For the conventional passive STAR-RIS, the amplitude should satisfy $\gamma_{t,n}+\gamma_{r,n}=1, \forall n\in\mathcal{N}$. In addition, if we set $\gamma_{t,n}=0$, the STAR-RIS will be simplified to a conventional reflecting-only RIS. Therefore, the latter is envisioned as a special case of STAR-RIS. Denoting $p_k$ as the transmission power of $k$-th AR user, the received signal at the BS is expressed as
\begin{equation}
\begin{aligned}
y=\sum\limits_{k\in\mathcal{K}}(h_{d,k}+\mathbf{h}_r^H\Gamma_{i(k)} \mathbf{h}_{I,k})\sqrt{p_k}s_k+n_0,
\end{aligned}
\end{equation}
where $i(k)=r$ and $i(k)=t$ indicate that the $k$-th AR user is located at the reflection space and transmission space, respectively, $s_k$ denotes the information symbol of $k$-th AR user that satisfies $\mathbb{E}(s_k)=0$, $\mathbb{E}(s_ks_k^H)=1$ and $\mathbb{E}(s_ks_m^H)=0$, $\forall k\neq m$, $n_0$ is the Gaussian noise at BS with power spectral density $\delta^2$, $h_{d,k}=\sqrt{\rho d^{-\iota_{d,k}}_{d,k}}(\sqrt{\frac{\varsigma}{\varsigma+1}}+\sqrt{\frac{1}{\varsigma+1}}\hat{h_{d,k}})$, $\mathbf{h}_{I,k}=\sqrt{\rho d^{-\iota_{l,k}}_{I,k}}(\sqrt{\frac{\varsigma}{\varsigma+1}}+\sqrt{\frac{1}{\varsigma+1}}\hat{\mathbf{h}_{I,k}})$ and $\mathbf{h}_r=\sqrt{\rho d^{-\iota_r}_{r}}(\sqrt{\frac{\varsigma}{\varsigma+1}}+\sqrt{\frac{1}{\varsigma+1}}\hat{\mathbf{h}_{r}})$ indicate the channels from the $k$-th AR user to the BS, from the $k$-th AR user to the STAR-RIS, and from the STAR-RIS to the BS, respectively, where $d_{d,k}$, $d_{I,k}$ and $d_{r}$ represent the distance between the $k$-th AR user and the BS, between the $k$-th AR user and the STAR-RIS, and between the STAR-RIS and the BS, respectively, $\rho$ denotes the path-loss at the unit distance,  $\iota_{d,k}$/$\iota_{l,k}$/$\iota_{r}$ indicates the path-loss factor, $\varsigma$ is the Rician factor, and $\hat{h_{d,k}}$, $\hat{\mathbf{h}_{I,k}}$ and $\hat{\mathbf{h}_{r}}$ follow complex Gaussian distribution with zero mean and unit variance \footnote{It is assumed that the BS can acquire perfect channel state information (CSI) using existing channel estimation techniques \cite{9722893}, thereby enabling the evaluation of the achievable minimum service latency in the considered STAR-RIS-enhanced Metaverse systems. The results can also serve as a useful reference for designing robust resource allocation and STAR-RIS configuration strategies in scenarios with imperfect CSI.}.
Defining $B$ as the system bandwidth, the transmission rate of $k$-th AR user is given by
\begin{equation}
\begin{aligned}
&R_{k}=B\log_2\left(1+\frac{p_k|h_{d,k}+\mathbf{h}_r^H\Gamma_{i(k)} \mathbf{h}_{I,k}|^2}{\sum\limits_{j\neq k}p_j|h_{d,j}+\mathbf{h}_r^H\Gamma_{i(j)} \mathbf{h}_{I,j}|^2+B\delta^2}\right), \\
&\forall k\in\mathcal{K}.
\end{aligned}
\end{equation}
Denoting $d_k$ (pixels) as the computation model size of $k$-th AR user, so the BS will receive the image frame with resolution $d_k\times d_k$ for object detection. Then, the communication delay for $k$-th AR user is expressed as $T_{c,k}=\frac{\beta d_k^2}{R_{k}}$, where $\beta$ represents the bits of information carried by a pixel. In addition, the corresponding communication energy consumption is $E_{c,k}=p_kT_{c,k}$. The total energy consumption of $k$-th AR user mainly includes communication and computation energy consumption, which is denoted as
\begin{equation}
E_k=E_{l,k}+E_{c,k},\forall k\in\mathcal{K}.
\end{equation}
\subsection{Edge Processing at Metaverse Server}
After receiving the image from AR users, the BS needs to conduct the object detection. Defining $f_{s,k}$ as the edge computation capability allocated to the $k$-th AR user, so the edge processing latency to analyze the image of $k$-th AR user is expressed as $T_{s,k}=\frac{\beta d_k^2c_k}{f_{s,k}}$, where $c_k$ stands for the number of CPU cycles required to process 1-bit data\footnote{It is worth emphasizing that this work concentrates on the design of an efficient multi-dimensional resource management framework aimed at minimizing user service latency in STAR-RIS-assisted AR-empowered Metaverse systems. Rather than tailoring the study to specific object detection architectures, we follow the modeling approach in \cite{feng2022resource,10959034} to characterize the computational demand of object detection. Based on this, we formulate the service latency minimization problem and develop the corresponding resource management algorithm, which are presented in the sequel. Furthermore, the proposed framework can be seamlessly integrated with a wide range of detection models, while ensuring reduced latency through optimized resource allocation.} Hence, the total service latency of $k$-th AR user is given by
\begin{equation}
T_k=T_{l,k}+T_{c,k}+T_{s,k}, \forall k\in\mathcal{K}.
\end{equation}
It should be noted that the result downloading is ignored considering the small size of results and powerful communication capability of BS \cite{8937028}.
%The energy consumption at the BS for processing the image from $k$-th AR user is expressed as $E_{s,k}=\kappa_s f_{s,k}^3T_{s,k}=\kappa_s \beta d_k^2c_kf_{s,k}^2$, where $\kappa_s$ is a constant associated with the hardware architecture.
\subsection{Problem Formulation}
This paper aims at minimizing the maximum service latency of AR users while restricting AR users' maximum energy consumption \footnote{To reveal the performance limit, the individual service latency constraint of each AR user is not integrated in formulated problem (5). Noted that the proposed method in this paper can be extended to the scenario considering the individual service latency constraint of each AR user.}, via jointly optimizing the CPU-cycle frequency  $\{f_{l,k}\}$ of AR users, the transmission power $\{p_k\}$ of AR users, the transmit/reflect coefficient matrix $\{\Gamma_t,\Gamma_r\}$ of STAR-RIS, and the computation resource assignment $\{f_{s,k}\}$ of BS. Mathematically, the corresponding optimization problem is expressed as
\begin{equation}
\begin{split}
\underset{\{f_{l,k}, p_{k}, \Gamma_t, \Gamma_r, f_{s,k}\}}{\text{minimize }}~~~& \underset{k\in\mathcal{K}}{\text{max}}~\{T_{l,k}+T_{c,k}+T_{s,k}\}\\
\text{s.t. }~~~~~~
&\text{C1: } 0\leq f_{l,k}\leq f_{k,\text{max}}, \forall k\in\mathcal{K},\\
&\text{C2: } 0\leq p_k\leq p_{k,\text{max}}, \forall k\in\mathcal{K},\\
&\text{C3: } 0\leq \theta_{i,n}\leq 2\pi, \forall i\in\{t,r\}, n\in\mathcal{N},\\
&\text{C4: } \gamma_{t,n}+\gamma_{r,n}=1, \forall n\in\mathcal{N},\\
&\text{C5: } E_{l,k}+E_{c,k}\leq E_{k,\text{max}},\forall k\in\mathcal{K},\\
&\text{C6: } \sum\limits_{k\in\mathcal{K}}f_{s,k}\leq F,\\
&\text{C7: } f_{s,k}\geq 0, \forall k\in\mathcal{K},\\
&\text{C8: }\gamma_{i,n}\geq 0, \forall i\in\{t,r\}, n\in\mathcal{N},\\
\end{split}
\end{equation}
where $f_{k,\text{max}}$, $p_{k,\text{max}}$ and $E_{k,\text{max}}$ denote the maximum CPU-cycle frequency, transmit power and energy consumption of AR users, respectively. In (5), C1 restricts the maximum CPU-cycle frequencies of AR users, C2 implies the transmission power constraints of AR users, C3 and C4 stand for the phases and amplitudes constraints of STAR-RIS, C5 represents the maximum energy consumption constraints of AR users, C6 and C7 indicate the computation resource constraint of BS, and C8 denotes the amplitudes of STAR-RIS should be non-negative.

\emph{Remark 1: } In considered AR-empowered Metaverse systems, service latency of AR users is an important performance metric to achieve real-time object detection, especially in the delay-sensitive scenario. For instance, high latency for object detection may cause serious traffic accidents in IoVs. Moreover, a lower service latency will greatly reduce the possibility of discomfort of AR users, e.g., motion sickness. Better service experience will attract more AR users to join the Metaverse systems and to enrich the virtual scenarios by sharing their collected environmental data. Therefore, this paper adopts the maximum service latency of AR users as our optimization objective.

\emph{Remark 2: } It is worth emphasizing that this paper concentrates on a representative scenario with static AR users to enable tractable analysis, thereby revealing the fundamental impact of the proposed joint resource management and STAR-RIS optimization design on the maximum service latency across all AR users in Metaverse systems. Nonetheless, the proposed framework also offers valuable insights for the long-term optimization of STAR-RIS-enhanced Metaverse systems with user mobility, since the long-term problem can be decomposed into a sequence of per–time-slot optimization subproblems.

\emph{Remark 3: } Unlike the total service latency minimization problem focusing on reducing the overall latency of all AR users, this paper aims at minimizing the maximum service latency experienced by any individual AR user. Our proposed method tends to allocate the network resources to ensure that no AR user experiences excessively high latency, even if it means that the overall latency of all AR users might be slightly higher. Considering the user-centric service requirements of Metaverse systems, the proposed method is suitable for conducting network resource management and STAR-RIS configuration optimization for AR-empowered Metaverse systems.

\emph{Remark 4: }The formulated maximum service latency minimization problem (5) is strictly non-convex due to the following reasons: 1) Because of the presence of harmful co-channel interference and the fractional structure, the service latency and energy consumption expressions are strictly non-convex and non-concave; 2) The highly coupled among optimization variables, such as $\Gamma_i$ and $p_k$, $\gamma_{t,n}$ and $\gamma_{r,n}$, etc. Therefore, conventional convex optimization methods, such as Lagrange duality method and interior-point method, cannot be directly used to obtain the optimal solution of (5). Furthermore, the heuristic algorithms are generally high-complexity and provide uncertain convergence rate.
\section{Proposed Solution}
To solve the formulated non-convex problem (5), we first utilize an approximate method to rewrite the non-convex and non-concave transmission rate expression to a  tractable concave form. Then, to tackle the coupled optimization variables, we develop an alternating optimization method to optimize transmit power of AR users, transmit/reflect coefficient matrix of the STAR-RIS, local CPU frequencies of AR users, and computation resource allocation of the BS. Particularly, the transmit/reflect coefficient matrix of STAT-RIS is obtained by designing a penalty function-based method with proved convergence, the CPU-cycle frequencies of AR users are derived in closed-form solution, Lagrange duality method and convex optimization theory are adopted to obtain the optimal transmit power of AR users and computation resource assignment of the BS. In addition, the convergence and computational complexity are further analyzed.
%The detailed flowchart for solving (5) is illustrated in the following Fig. 2.
%\begin{figure}[htbp]
%\begin{center}
%\includegraphics[width=0.5\textwidth]{Fig2.pdf}
%\caption{The flowchart for solving service latency minimization problem (5).}
%\end{center}
%\end{figure}

To remove the max function in (5), we define $t=\underset{k\in\mathcal{K}}{\text{max}}~\{T_{l,k}+T_{c,k}+T_{s,k}\}$ to reformulate (5) to the following equivalent problem
\begin{subequations}
\begin{align}
\underset{\{f_{l,k}, p_{k}, \Gamma_t, \Gamma_r, f_{s,k},t\}}{\text{minimize }}~~~& t\\
\text{s.t. }~~~~~~~&T_{l,k}+T_{c,k}+T_{s,k}\leq t, \forall k\in\mathcal{K},\\
&\text{C1-C8.}
\end{align}
\end{subequations}

As observed in (2), the uplink transmission rate $R_k$ is non-convex and non-concave function due to the coupled optimization variables and harmful co-channel interference. Based on the Fenchel conjugate arguments \cite{9295408}, the following lemma allows us to rewrite $R_k$ for obtaining optimally solvable sub-problems in next subsections.
%To tackle it, the following \emph{lemma 1} is introduced to reformulate $R_k$.

\begin{lemma}
Denoting $f(b)=-ba+\ln(b)+1$, it yields that
\begin{equation}
-\ln a= \underset{b>0}{\text{max }} f(b).
\end{equation}
\end{lemma}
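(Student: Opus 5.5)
We treat Lemma 1 as a one-variable concave maximization, with $a>0$ held fixed (implicitly required, since $-\ln a$ must be defined). The plan is to show that $f$ has a unique stationary point, that this point is the global maximizer on $b>0$, and that evaluating $f$ there gives exactly $-\ln a$.

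First we check concavity on the open domain $b>0$. Differentiating gives $f'(b)=-a+\frac{1}{b}$ and $f''(b)=-\frac{1}{b^2}<0$. Hence $f$ is strictly concave on $(0,\infty)$, and any stationary point is the unique global maximizer there.

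Next we solve $f'(b)=0$, which gives $b^\star=\frac{1}{a}$. This point lies in the feasible set because $a>0$. Substituting,
\begin{equation*}
f(b^\star)=-\frac{1}{a}\,a+\ln\frac{1}{a}+1=-1-\ln a+1=-\ln a,
\end{equation*}
which is the claimed identity.

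To be fully careful that the supremum is attained rather than approached at the boundary, we also check the limits. As $b\to 0^+$, $\ln b\to-\infty$, and as $b\to\infty$, the linear term $-ab$ dominates $\ln b$. So $f\to-\infty$ at both ends, consistent with $b^\star$ being the maximizer. Equivalently, the inequality $\ln x\le x-1$ with $x=ab$ yields $f(b)\le -\ln a$ for all $b>0$, with equality iff $ab=1$; this gives a second, one-line proof.

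There is no real obstacle. The only point needing care is the standing assumption $a>0$: it makes the stationary point feasible and $-\ln a$ well-defined. In the paper's use, $a$ plays the role of an interference-plus-noise ratio and is therefore positive. We would also note that the optimal auxiliary variable $b^\star=1/a$ has closed form, which is what makes the later alternating updates tractable.
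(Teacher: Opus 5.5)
Your proof is correct and follows the same route as the paper. The paper only cites a Fenchel-conjugate argument and notes that the maximum is attained at $b=1/a$; your strict-concavity and stationary-point computation, or the one-line bound $\ln(ab)\le ab-1$, supplies exactly the verification it omits, and your assumption $a>0$ holds in the paper's use because $a$ is the interference-plus-noise power (a power rather than a ratio), which is at least $B\delta^2>0$.
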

The equality holds when $b=1/a$. Defining $a=\sum\limits_{j\neq k}p_j|h_{d,j}+\mathbf{h}_r^H\Gamma_{i(j)} \mathbf{h}_{I,j}|^2+B\delta^2$ and $b=y_k^{(n)}$, $R_{k}$ is transformed as
\begin{equation}
\begin{aligned}
&R_{k}= B\log_2\left(1+\frac{p_k|h_{d,k}+\mathbf{h}_r^H\Gamma_{i(k)} \mathbf{h}_{I,k}|^2}{\sum\limits_{j\neq k}p_j|h_{d,j}+\mathbf{h}_r^H\Gamma_{i(j)} \mathbf{h}_{I,j}|^2+B\delta^2}\right)\\
&=\frac{B}{\ln 2}\ln(\frac{\sum\limits_{k\in\mathcal{K}}p_k|h_{d,k}+\mathbf{h}_r^H\Gamma_{i(k)} \mathbf{h}_{I,k}|^2+B\delta^2}{\sum\limits_{j\neq k}p_j|h_{d,j}+\mathbf{h}_r^H\Gamma_{i(j)} \mathbf{h}_{I,j}|^2+B\delta^2})\\
&=\frac{B}{\ln 2}(\ln(\sum\limits_{k\in\mathcal{K}}p_k|h_{d,k}+\mathbf{h}_r^H\Gamma_{i(k)} \mathbf{h}_{I,k}|^2+B\delta^2)\\
&-\ln(\sum\limits_{j\neq k}p_j|h_{d,j}+\mathbf{h}_r^H\Gamma_{i(j)} \mathbf{h}_{I,j}|^2+B\delta^2))\\
&\overset{(\alpha_1)}{=}\frac{B}{\ln 2}(\ln(\sum\limits_{k\in\mathcal{K}}p_k|h_{d,k}\!+\!\mathbf{h}_r^H\Gamma_{i(k)} \mathbf{h}_{I,k}|^2+B\delta^2)\\
&-y_k^{(n)}(\sum\limits_{j\neq k}p_j|h_{d,j}\!+\mathbf{h}_r^H\Gamma_{i(j)} \mathbf{h}_{I,j}|^2\!+B\delta^2)\!+\ln(y_k^{(n)})\!+1),
\end{aligned}
\end{equation}
where $(\alpha_1)$ holds due to the transformation of $-\ln(\sum\limits_{j\neq k}p_j|h_{d,j}+\mathbf{h}_r^H\Gamma_{i(j)} \mathbf{h}_{I,j}|^2+B\delta^2)$ via \emph{Lemma 1}.

Then, we design an alternating optimization method to decouple the optimization variables into four blocks, namely $\{\{p_k\},\{\Gamma_t,\Gamma_r\},\{f_{l,k}\},\{f_{s,k}\}\}$. Then, the transmit power $\{p_k\}$ of AR users, transmit/reflect coefficient matrix $\{\Gamma_t,\Gamma_r\}$ of the STAR-RIS, CPU-cycle frequency $\{f_{l,k}\}$ of AR users, and computation resource allocation $\{f_{s,k}\}$ of the BS are alternately optimized, while keeping the other three blocks of variables fixed. After obtaining the optimal solution, $y_k^{(n)}$ is updated at the end of each iteration.
\subsection{Transmit Power Control}
Under given $\{f_{l,k}^{(n)}, \Gamma_t^{(n)}, \Gamma_r^{(n)}, f_{s,k}^{(n)}\}$ in the $n$-th iteration, (6) is converted to the power control subproblem
\begin{subequations}
\begin{align}
\underset{\{p_{k}, t\}}{\text{minimize }}~~~& t\\
\text{s.t. }~~~~\begin{split}&T_{l,k}(f_{l,k}^{(n)})+\frac{\beta d_k^2}{R_k(p_k,\Gamma_{r}^{(n)},\Gamma_t^{(n)},y_k^{(n)})} \\
&+T_{s,k}(f_{s,k}^{(n)})\leq t, \forall k\in\mathcal{K},\end{split}\\
& 0\leq p_k\leq p_{k,\text{max}}, \forall k\in\mathcal{K},\\
\begin{split}& E_{l,k}(f_{l,k}^{(n)})+\frac{\beta p_kd_k^2}{R_k(p_k,\Gamma_{r}^{(n)},\Gamma_t^{(n)},y_k^{(n)})} \\
&\leq E_{k,\text{max}},\forall k\in\mathcal{K},\end{split}
\end{align}
\end{subequations}
where $T_{l,k}(f_{l,k}^{(n)})=\frac{w_k}{f_{l,k}^{(n)}}$, $T_{s,k}(f_{s,k}^{(n)})=\frac{\beta d_k^2c_k}{f_{s,k}^{(n)}}$, $E_{l,k}(f_{l,k}^{(n)})=\kappa (f_{l,k}^{(n)})^2w_k$, and
\begin{equation}
\begin{aligned}
&R_k(p_k,\Gamma_{r}^{(n)},\Gamma_t^{(n)},y_k^{(n)})\!=\\
&\!\frac{B}{\ln 2}(\ln(\sum\limits_{k\in\mathcal{K}}p_k|h_{d,k}\!+\mathbf{h}_r^H\Gamma_{i(k)}^{(n)} \mathbf{h}_{I,k}|^2\!+\!B\delta^2)+\ln(y_k^{(n)})\\
&-y_k^{(n)}(\sum\limits_{j\neq k}p_j|h_{d,j}+\mathbf{h}_r^H\Gamma_{i(j)}^{(n)} \mathbf{h}_{I,j}|^2+B\delta^2)+1), \forall k\in\mathcal{K}.
\end{aligned}
\end{equation}
After the basic inequality transformations, (9) can be further rewritten as
\begin{subequations}
\begin{align}
\underset{\{p_{k}, t\}}{\text{minimize }}~& t\\
\text{s.t. }~\begin{split}&R_k(p_k,\Gamma_{r}^{(n)},\Gamma_t^{(n)},y_k^{(n)})\geq \\
&\frac{\beta d_k^2}{t-T_{l,k}(f_{l,k}^{(n)})-T_{s,k}(f_{s,k}^{(n)})}, \forall k\in\mathcal{K},\end{split}\\
& 0\leq p_k\leq p_{k,\text{max}}, \forall k\in\mathcal{K},\\
\begin{split}& R_k(p_k,\Gamma_{r}^{(n)},\Gamma_t^{(n)},y_k^{(n)})\geq\\
&\frac{\beta p_kd_k^2}{E_{k,\text{max}}-E_{l,k}(f_{l,k}^{(n)})}, \forall k\in\mathcal{K}.\end{split}
\end{align}
\end{subequations}

\begin{theorem}
Problem (11) is a convex optimization problem.
\end{theorem}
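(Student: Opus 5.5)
We verify that the objective and each constraint family of (11) define a convex set in the joint variable $(\{p_k\},t)$. The objective $t$ is linear and (11c) consists of box constraints, so both are trivially convex. The work is in (11b) and (11d), which we write as $g_k(\mathbf{p},t)\le 0$ with $g_k$ convex. To this end, set $a_k=|h_{d,k}+\mathbf{h}_r^H\Gamma_{i(k)}^{(n)}\mathbf{h}_{I,k}|^2\ge 0$. These coefficients are constants because $\Gamma_r^{(n)},\Gamma_t^{(n)}$ are fixed, as are $y_k^{(n)}>0$, $f_{l,k}^{(n)}$ and $f_{s,k}^{(n)}$.

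\emph{Step 1: concavity of the surrogate rate.} By (10),
\begin{equation*}
R_k=\frac{B}{\ln 2}\Big(\ln\big(\textstyle\sum_{j\in\mathcal{K}}p_ja_j+B\delta^2\big)-y_k^{(n)}\big(\sum_{j\neq k}p_ja_j+B\delta^2\big)+\ln y_k^{(n)}+1\Big).
\end{equation*}
The first term is the logarithm, a concave nondecreasing function, composed with an affine positive function of $\mathbf{p}$, so it is concave. The second term is affine in $\mathbf{p}$, and the remaining terms are constants. Hence $R_k$ is jointly concave in $\mathbf{p}$, and trivially concave in $(\mathbf{p},t)$ since it does not depend on $t$. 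This is exactly the purpose of the Fenchel step in Lemma 1: the interference term $-\ln(\cdot)$, which would be convex and therefore break concavity, has been replaced by an affine lower bound.

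\emph{Step 2: constraint (11d).} Since $E_{k,\max}-E_{l,k}(f_{l,k}^{(n)})$ is a constant, the right-hand side of (11d) is linear in $p_k$. If this constant is positive, (11d) reads (concave) $-$ (linear) $\ge 0$, which defines a convex set. If the constant is nonpositive, the subproblem is infeasible, or only $p_k=0$ is feasible, which is a degenerate case we would note and exclude.

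\emph{Step 3: constraint (11b), the main obstacle.} This is the only constraint coupling $t$ nonlinearly. Let $c_k=T_{l,k}(f_{l,k}^{(n)})+T_{s,k}(f_{s,k}^{(n)})$. Constraint (11b) is only meaningful on the domain $t>c_k$, and there the function $\phi(t)=\beta d_k^2/(t-c_k)$ is convex, since $\phi''(t)=2\beta d_k^2/(t-c_k)^3>0$. Then
\begin{equation*}
g_k(\mathbf{p},t)=\frac{\beta d_k^2}{t-c_k}-R_k(\mathbf{p})
\end{equation*}
is the sum of a convex function of $t$ and a convex function of $\mathbf{p}$, so it is jointly convex on $\{t>c_k\}$. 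We would argue that imposing $t>c_k$ loses nothing, because the original constraint (9b) already forces $t\ge c_k+\beta d_k^2/R_k>c_k$ whenever $R_k>0$. The restricted domain is an open convex set, so its intersection with the sublevel set $\{g_k\le 0\}$ remains convex. Alternatively, the convexity of (11b) can be seen from the hyperbolic form $(t-c_k)R_k\ge\beta d_k^2$, a second-order-cone-representable constraint between the concave $R_k$ and the affine $t-c_k$.

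Combining the steps, (11) minimizes a linear objective over an intersection of convex sets, and is therefore a convex optimization problem.
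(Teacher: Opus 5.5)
Your proposal is correct and follows the same route as the paper. Both arguments establish concavity of the surrogate rate $R_k$ in $\{p_k\}$ (logarithm of an affine function plus affine terms) and convexity of $\beta d_k^2/(t-T_{l,k}-T_{s,k})$ in $t$, and conclude that (11b) and (11d) are convex constraints; you additionally make explicit the implicit domain $t>T_{l,k}(f_{l,k}^{(n)})+T_{s,k}(f_{s,k}^{(n)})$ and the sign condition on $E_{k,\text{max}}-E_{l,k}(f_{l,k}^{(n)})$, which the paper leaves unstated.
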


\emph{Proof: }To show the convexity, we first prove that $R_k(p_k,\Gamma_{r}^{(n)},\Gamma_t^{(n)},y_k^{(n)})$ is a concave function. According to convex optimization theory, $\ln(\sum\limits_{k\in\mathcal{K}}p_k|h_{d,k}+\mathbf{h}_r^H\Gamma_{i(k)}^{(n)} \mathbf{h}_{I,k}|^2+B\delta^2)$ is a concave function of $\{p_k\}$. Combined with other linear functions, we can derive that $R_k(p_k,\Gamma_{r}^{(n)},\Gamma_t^{(n)},y_k^{(n)})$ is a concave function associated with $\{p_k\}$. In addition, $\frac{\beta d_k^2}{t-T_{l,k}(f_{l,k}^{(n)})-T_{s,k}(f_{s,k}^{(n)})}$ is a convex function of $t$ \cite{boyd2004convex}. Therefore, (11b) and (11d) are convex constraints. Integrated with other linear objective function and constraints, we derive that problem (11) is a convex optimization problem.

Therefore, the standard convex optimization toolbox, such as CVX \cite{grant2014cvx}, can be called to solve the transmit power control problem (11).
\subsection{Transmit/Reflect Coefficient Matrix Optimization of STAR-RIS}
Let us define $\mathbf{v}_{i(k)}=(\sqrt{\gamma_{i(k),1}}e^{j\theta_{i(k),1}},\sqrt{\gamma_{i(k),2}}e^{j\theta_{i(k),2}},\cdots,\sqrt{\gamma_{i(k),N}}e^{j\theta_{i(k),N}})^T$, $\hat{\mathbf{v}}_{i(k)}=[\mathbf{v}_{i(k)}^T,1]^T$, $\mathbf{h}_k^H=[\mathbf{h}_r^H\text{diag}(\mathbf{h}_{I,k}),h_{d,k}]$, $\hat{\mathbf{V}}_{i(k)}=\hat{\mathbf{v}}_{i(k)}\hat{\mathbf{v}}_{i(k)}^H$, and $\mathbf{H}_k=\mathbf{h}_k\mathbf{h}_k^H$, it yields that
\begin{equation}
\begin{aligned}
&|h_{d,k}+\mathbf{h}_r^H\Gamma_{i(k)}\mathbf{h}_{I,k}|^2=\left|\mathbf{h}_k^H\hat{\mathbf{v}}_{i(k)}\right|^2\\
&=\mathbf{h}_k^H\hat{\mathbf{v}}_{i(k)}\hat{\mathbf{v}}_{i(k)}^H\mathbf{h}_k=\mathbf{h}_k^H\hat{\mathbf{V}}_{i(k)}\mathbf{h}_k\\
&=\text{Tr}(\hat{\mathbf{V}}_{i(k)}\mathbf{H}_k),\forall k\in\mathcal{K}.
\end{aligned}
\end{equation}
Therefore, for given $\{f_{l,k}^{(n)}, p_k^{(n)}, f_{s,k}^{(n)}\}$, (6) is transformed to the transmit/reflect coefficient matrix optimization subproblem as follows:
\begin{subequations}
\begin{align}
\underset{\{\hat{\mathbf{V}}_{t}, \hat{\mathbf{V}}_{r},\atop\gamma_{t,n},\gamma_{r,n},t\}}{\text{minimize }}~& t\\
\text{s.t. }~\begin{split}&T_{l,k}(f_{l,k}^{(n)})+\frac{\beta d_k^2}{R_k(p_k^{(n)},\hat{\mathbf{V}}_{t}, \hat{\mathbf{V}}_{r},y_k^{(n)})}\\
&+T_{s,k}(f_{s,k}^{(n)})\leq t, \forall k\in\mathcal{K},\end{split}\\
\begin{split}&E_{l,k}(f_{l,k}^{(n)})+\frac{\beta p_k^{(n)}d_k^2}{R_k(p_k^{(n)},\hat{\mathbf{V}}_{t}, \hat{\mathbf{V}}_{r},y_k^{(n)})}\\
&\leq E_{k,\text{max}},\forall k\in\mathcal{K},\end{split}\\
&[\hat{\mathbf{V}}_i]_{n,n}= \gamma_{i,n}, \forall i\in\{t,r\}, n\in\mathcal{N},\\
& [\hat{\mathbf{V}}_i]_{N+1,N+1}=1, \forall i\in\{t,r\},\\
&\hat{\mathbf{V}}_i\succeq 0, \forall i\in\{t,r\}, \\
&\text{Rank}(\hat{\mathbf{V}}_i)=1, \forall i\in\{t,r\},\\
& \gamma_{t,n}+\gamma_{r,n}=1, \forall n\in\mathcal{N},\\
& \gamma_{i,n}\geq 0, \forall i\in\{t,r\},n\in\mathcal{N},
\end{align}
\end{subequations}
where
\begin{equation}
\begin{aligned}
&R_k(p_k^{(n)},\hat{\mathbf{V}}_{t}, \hat{\mathbf{V}}_{r}, y_k^{(n)})=\frac{B}{\ln 2}(\ln(\sum\limits_{k\in\mathcal{K}}p_k^{(n)}\text{Tr}(\hat{\mathbf{V}}_{i(k)}\mathbf{H}_k)\!+B\delta^2)\\
&-y_k^{(n)}(\sum\limits_{j\neq k}p_j^{(n)}\text{Tr}(\hat{\mathbf{V}}_{i(j)}\mathbf{H}_j)+B\delta^2)+\ln(y_k^{(n)})+1), \forall k\in\mathcal{K}.
\end{aligned}
\end{equation}
After some simple inequality transformations, problem (13) can be recast as
\begin{subequations}
\begin{align}
\underset{\{\hat{\mathbf{V}}_{t}, \hat{\mathbf{V}}_{r},\atop\gamma_{t,n},\gamma_{r,n},t\}}{\text{minimize }}~& t\\
\text{s.t. }~\begin{split}&R_k(p_k^{(n)},\hat{\mathbf{V}}_{t}, \hat{\mathbf{V}}_{r}, y_k^{(n)})\!\geq \\
&\frac{\beta d_k^2}{t\!-T_{l,k}(f_{l,k}^{(n)})\!-T_{s,k}(f_{s,k}^{(n)})}, \forall k\in\mathcal{K},\end{split}\\
\begin{split}&R_k(p_k^{(n)},\hat{\mathbf{V}}_{t}, \hat{\mathbf{V}}_{r}, y_k^{(n)})\geq \\
&\frac{\beta p_k^{(n)}d_k^2}{E_{k,\text{max}}-E_{l,k}(f_{l,k}^{(n)})}, \forall k\in\mathcal{K},\end{split}\\
&\text{(13d)-(13i)}.
\end{align}
\end{subequations}
\vspace{-1em}

Above problem (15) is still non-convex due to the rank-one constraint. In general, the semi-definite relaxation technique can be utilized to drop non-convex rank-one constraint. However, it cannot ensure that the optimal solution is rank-one. Therefore, we propose a penalty function-based method to tackle the non-convex problem (15). As observed, an equivalent form for (13g) is expressed as $\text{Tr}(\hat{\mathbf{V}}_i)=\lambda_{\text{max}}(\hat{\mathbf{V}}_i)$ , where $\lambda_{\text{max}}(\hat{\mathbf{V}}_i)$ denotes the maximum eigenvalue of $\hat{\mathbf{V}}_i$. Moreover, the inequality $\text{Tr}(\hat{\mathbf{V}}_i)\geq \lambda_{\text{max}}(\hat{\mathbf{V}}_i)$ holds for any $\hat{\mathbf{V}}_i \succeq 0$, and the equality holds when $\text{Rank}(\hat{\mathbf{V}}_i)=1$. Hence, we can transform (15) to the following problem
\begin{subequations}
\begin{align}
\underset{\{\hat{\mathbf{V}}_{t}, \hat{\mathbf{V}}_{r},\gamma_{t,n},\gamma_{r,n},t\}}{\text{minimize }}&~~t+\nu\sum\limits_{i}(\text{Tr}(\hat{\mathbf{V}}_i)-\lambda_{\text{max}}(\hat{\mathbf{V}}_i))\\
\text{s.t. }~~~~
&\text{(13d)-(13f),(13h)-(13i),(15b)-(15c),}
\end{align}
\end{subequations}
where $\nu$ is the penalty factor, and rank of optimal transmit/reflect coefficient matrix for (16) will be forced to approach to 1 by enlarging the penalty factor. However, the convexity of $\lambda_{\text{max}}(\hat{\mathbf{V}}_i)$ makes the objective function (16a) non-convex. Based on the successive convex approximation method, $\lambda_{\text{max}}(\hat{\mathbf{V}}_i)$ is rewritten as its first-order Taylor approximation, i.e.,
\begin{equation}
\lambda_{\text{max}}(\hat{\mathbf{V}}_i)\geq \lambda_{\text{max}}(\hat{\mathbf{V}}^{(j)}_i)+(\hat{\mathbf{v}}_{\text{max},i}^{(j)})^H(\hat{\mathbf{V}}_i-\hat{\mathbf{V}}^{(j)}_i)\hat{\mathbf{v}}_{\text{max},i}^{(j)},
\end{equation}
where $\hat{\mathbf{V}}^{(j)}_i$ represents the optimal $\hat{\mathbf{V}}_i$ at the $j$-th iteration, and $\hat{\mathbf{v}}_{\text{max},i}^{(j)}$ denotes the unit eigenvector associate with maximum eigenvalue $\lambda_{\text{max}}(\hat{\mathbf{V}}^{(j)}_i)$ of $\hat{\mathbf{V}}^{(j)}_i$. Therefore, (16) is converted to the following problem
\begin{subequations}
\begin{align}
\begin{split}\underset{\{\hat{\mathbf{V}}_{t}, \hat{\mathbf{V}}_{r},\gamma_{r,n},\gamma_{t,n},t\}}{\text{minimize }}&~~t+\nu\sum\limits_i(\text{Tr}(\hat{\mathbf{V}}_i)-(\hat{\mathbf{v}}_{\text{max},i}^{(j)})^H\hat{\mathbf{V}}_i\hat{\mathbf{v}}_{\text{max},i}^{(j)})\end{split}\\
\text{s.t. }~~~~
&\text{(13d)-(13f),(13h)-(13i),(15b)-(15c).}
\end{align}
\end{subequations}
It should be noted that the constant term $\sum\limits_i (-\lambda_{\text{max}}(\hat{\mathbf{V}}^{(j)}_i)+(\hat{\mathbf{v}}_{\text{max},i}^{(j)})^H\hat{\mathbf{V}}^{(j)}_i\hat{\mathbf{v}}_{\text{max},i}^{(j)})$ is removed in (18a). Then, the classic convex toolbox, such as CVX, can be used to solve (18). After several number of iterations, the rank-one solution will be obtained when $\sum\limits_{i}(\text{Tr}(\hat{\mathbf{V}}_i^{(j)})- \lambda_{\text{max}}(\hat{\mathbf{V}}_i^{(j)}))\approx 0$. 
%Next, $\hat{\mathbf{v}}^{(j)}_i$ is acquired via eigenvalue decomposition, namely $\hat{\mathbf{v}}^{(j)}_i=\sqrt{\lambda_{\text{max}}(\hat{\mathbf{V}}^{(j)}_i)}\hat{\mathbf{v}}^{(j)}_{\text{max},i}$. Besides, the optimal coefficient matrix can be recovered by $\Gamma_i^{(j)}=\text{diag}(\hat{\mathbf{v}}^{(j)}_i(1:N))=\text{diag}(\sqrt{\lambda_{\text{max}}(\hat{\mathbf{V}}^{(j)}_i)}\hat{\mathbf{v}}^{(j)}_{\text{max},i}(1:N))$.

In summary, the proposed penalty function-based method for solving transmit/reflect coefficient matrix optimization subproblem is illustrated in Algorithm 1. Moreover, the convergence of Algorithm 1 is proved in the following \emph{Theorem 2}.
\begin{algorithm}[htbp]
\caption{Penalty function-based method for solving transmit/reflect coefficient matrix optimization subproblem (13)}
\textbf{Initialize:} Set $\{\hat{\mathbf{V}}_t^{(0)},\hat{\mathbf{V}}_r^{(0)},\gamma_{t,n}^{(0)},\gamma_{r,n}^{(0)}\}$, penalty factor $\nu$, and $j=1$.\\
\textbf{Repeat:}\\
\quad Acquire the optimal coefficient matrices $\hat{\mathbf{V}}_t^{(j)}$ and $\hat{\mathbf{V}}_r^{(j)}$ by solving (18);\\
%\quad Recovery $\Gamma_i^{(j)}=\text{diag}(\sqrt{\lambda_{\text{max}}(\hat{\mathbf{V}}^{(j)}_i)}\hat{\mathbf{v}}^{(j)}_{\text{max},i}(1:N))$;\\
\quad Update the iteration factor $j=j+1$;\\
\textbf{Until} convergence.\\
\textbf{Obtaining} optimal solution $\{\hat{\mathbf{V}}_t^{\ast}, \hat{\mathbf{V}}_r^{\ast}\}$ .\\
\end{algorithm}
\vspace{-1em}

%\begin{lemma}
%Assuming $\mathbf{Q}_1$ and $\mathbf{Q}_2$ as positive semi-definite matrixes. According to convex optimization theory, we have $\lambda_{\text{max}}(\mathbf{Q}_1)-\lambda_{\text{max}}(\mathbf{Q}_2)\geq \mathbf{q}_{\text{max}}^H(\mathbf{Q}_1-\mathbf{Q}_2)\mathbf{q}_{\text{max}}$, where $\mathbf{q}_{\text{max}}$ stands for eigenvector associate with maximum eigenvalue $\lambda_{\text{max}}(\mathbf{Q}_2)$ of $\mathbf{Q}_2$.
%\end{lemma}

\begin{theorem}
Algorithm 1 can converge to the optimal transmit/reflect coefficient matrix after a finite number of iterations.
\end{theorem}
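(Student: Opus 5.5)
To prove \emph{Theorem 2}, I would use the standard majorization–minimization argument for successive convex approximation, applied to the penalized problem (16) with the surrogate (18). Let $g(\hat{\mathbf{V}}_t,\hat{\mathbf{V}}_r,t)=t+\nu\sum_i(\text{Tr}(\hat{\mathbf{V}}_i)-\lambda_{\text{max}}(\hat{\mathbf{V}}_i))$ denote the objective of (16). Let $\tilde g^{(j)}$ denote the objective of (18) built at the $j$-th iterate, including the constant term that was dropped. The feasible set $\mathcal{F}$ defined by (13d)--(13f), (13h)--(13i), (15b)--(15c) does not change across iterations, and it is convex because $R_k$ in (14) is concave in $\{\hat{\mathbf{V}}_i\}$ (log of an affine function minus an affine function). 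The iterates therefore all stay in $\mathcal{F}$.

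The first step is the majorization property. Since $\lambda_{\text{max}}(\cdot)$ is convex on Hermitian matrices and $\hat{\mathbf{v}}_{\text{max},i}^{(j)}(\hat{\mathbf{v}}_{\text{max},i}^{(j)})^H$ is a subgradient at $\hat{\mathbf{V}}_i^{(j)}$, inequality (17) holds. Consequently $\tilde g^{(j)}\geq g$ on $\mathcal{F}$, with equality at the $j$-th iterate. Because the $(j+1)$-th iterate minimizes $\tilde g^{(j)}$ over $\mathcal{F}$, I obtain the chain
\begin{equation*}
g^{(j+1)}\leq \tilde g^{(j)}(\hat{\mathbf{V}}^{(j+1)},t^{(j+1)})\leq \tilde g^{(j)}(\hat{\mathbf{V}}^{(j)},t^{(j)})=g^{(j)},
\end{equation*}
so the sequence $\{g^{(j)}\}$ is monotonically non-increasing.

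Next I need a lower bound. We have $t\geq T_{l,k}+T_{s,k}>0$ by (15b), and $\text{Tr}(\hat{\mathbf{V}}_i)-\lambda_{\text{max}}(\hat{\mathbf{V}}_i)\geq0$ for $\hat{\mathbf{V}}_i\succeq0$, so $g$ is bounded below. A monotone, bounded sequence converges. To turn ``converges'' into ``stops after finitely many iterations'', I would read the theorem in its practical sense: the stopping rule in Algorithm 1 requires the decrease $g^{(j)}-g^{(j+1)}$ (or the rank gap $\sum_i(\text{Tr}-\lambda_{\text{max}})$) to fall below a tolerance $\epsilon>0$. A convergent sequence triggers any such rule after finitely many steps.

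The remaining and hardest step is justifying ``optimal''. My plan is to show that any limit point satisfies the KKT conditions of (16). The feasible set is compact, since $\hat{\mathbf{V}}_i\succeq0$ has bounded diagonal by (13d), (13e) and hence bounded entries, and $t$ can be restricted to a bounded interval; a limit point therefore exists. The surrogate matches $g$ in value at the expansion point, and its directional derivatives agree there in the direction of the subgradient, so the standard MM/SCA stationarity result applies. Since $\lambda_{\text{max}}$ may be nondifferentiable, I would phrase this via directional derivatives. Finally, I would argue that for sufficiently large $\nu$ the penalty term forces $\text{Tr}(\hat{\mathbf{V}}_i)=\lambda_{\text{max}}(\hat{\mathbf{V}}_i)$ at the limit (exact penalty), so the limit is rank-one and feasible for (15). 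This stationary rank-one point is what is meant by the ``optimal'' coefficient matrix. Global optimality cannot be claimed in general, and I would state that caveat explicitly.
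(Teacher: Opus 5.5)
Your monotonicity argument is the paper's proof. Step (a1) in (19) is your majorization via the subgradient inequality (17), and step (a2) is the optimality of the $(j+1)$-th solution of (18) over a feasible set that does not change with $j$. The paper then invokes boundedness to conclude convergence to a stationary point in finitely many iterations. Your explicit lower bound ($t\geq T_{l,k}+T_{s,k}>0$ plus nonnegativity of the penalty) and your tolerance-based reading of \emph{finite} are the right way to make that conclusion precise. Together they are everything the paper's proof actually establishes.

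The extra step you add to justify \emph{optimal} fails in two places.

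First, the standard MM/SCA stationarity result needs (18a) and (16a) to have equal directional derivatives at the expansion point. That holds only if $\lambda_{\text{max}}(\hat{\mathbf{V}}_i^{(j)})$ is simple. For a repeated top eigenvalue and a direction $\mathbf{D}$, $\lambda_{\text{max}}'(\hat{\mathbf{V}}_i^{(j)};\mathbf{D})=\max\{\mathbf{u}^H\mathbf{D}\mathbf{u}:\|\mathbf{u}\|=1,\ \mathbf{u}\ \text{in the top eigenspace}\}$ can strictly exceed $(\hat{\mathbf{v}}_{\text{max},i}^{(j)})^H\mathbf{D}\hat{\mathbf{v}}_{\text{max},i}^{(j)}$. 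You therefore only get criticality in the DC sense, not d-stationarity: along a subsequence on which the unit eigenvectors also converge, the limit minimizes the convex surrogate built from a limiting top eigenvector.

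Second, exact-penalty theorems are statements about (global) minimizers of (16) for $\nu$ above a threshold. They say nothing about the critical points an SCA iteration reaches, and Algorithm 1 never enlarges $\nu$ (the simulations use $\nu=10^{-5}$). In fact (13d), (13e), (13h) force $\sum_i\text{Tr}(\hat{\mathbf{V}}_i)=N+2$ on the whole feasible set, so the penalty equals $\nu(N+2)-\nu\sum_i\lambda_{\text{max}}(\hat{\mathbf{V}}_i)$. At any iterate whose principal eigenvectors are the last canonical unit vector, the linearized penalty in (18a) is then constant on the feasible set, so that step ignores $\nu$ entirely.

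Rank-one-ness therefore cannot be derived this way; the paper only checks $\sum_i(\text{Tr}(\hat{\mathbf{V}}_i^{(j)})-\lambda_{\text{max}}(\hat{\mathbf{V}}_i^{(j)}))\approx 0$ a posteriori. What you can defend is:
\begin{itemize}
\item monotone convergence of the penalized objective;
\item DC-criticality of limit points;
\item \emph{if} a limit point is rank-one, then d-stationarity for (16) plus feasibility for (15), because its top eigenvalue is simple (as $[\hat{\mathbf{V}}_i]_{N+1,N+1}=1$).
\end{itemize}
State rank-one and optimality conditionally rather than as proven.
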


\emph{Proof: }
Suppose that $\{\hat{\mathbf{V}}_t^{(j)}, \hat{\mathbf{V}}_r^{(j)}\}$ is the optimal solution of (18) at the $j$-th iteration. Given $\{f_{l,k}^{(n)}, p_k^{(n)}, f_{s,k}^{(n)}, y_k^{(n)}\}$, let us denote $t(\hat{\mathbf{V}}_t^{(j)}, \hat{\mathbf{V}}_r^{(j)})=\underset{k\in\mathcal{K}}{\text{max}} ~\{T_{l,k}(f_{l,k}^{(n)})+\frac{\beta d_k^2}{R_k(p_k^{(n)},\hat{\mathbf{V}}_{t}^{(j)}, \hat{\mathbf{V}}_{r}^{(j)},y_k^{(n)})}+T_{s,k}(f_{s,k}^{(n)})\}$. Therefore, we have
\begin{displaymath}
\begin{split}
&t(\hat{\mathbf{V}}_t^{(j+1)}, \hat{\mathbf{V}}_r^{(j+1)})+\nu\sum\limits_{i}(\text{Tr}(\hat{\mathbf{V}}_i^{(j+1)})-\lambda_{\text{max}}(\hat{\mathbf{V}}_i^{(j+1)}))\\
&\overset{(a1)}{\leq}t(\hat{\mathbf{V}}_t^{(j+1)}, \hat{\mathbf{V}}_r^{(j+1)})+\nu\sum\limits_{i}(\text{Tr}(\hat{\mathbf{V}}_i^{(j+1)})-\\
&\lambda_{\text{max}}(\hat{\mathbf{V}}^{(j)}_i)-(\hat{\mathbf{v}}_{\text{max},i}^{(j)})^H(\hat{\mathbf{V}}_i^{(j+1)}-\hat{\mathbf{V}}^{(j)}_i)\hat{\mathbf{v}}_{\text{max},i}^{(j)})\\
&\overset{(a2)}{\leq}t(\hat{\mathbf{V}}_t^{(j)}, \hat{\mathbf{V}}_r^{(j)})+\nu\sum\limits_{i}(\text{Tr}(\hat{\mathbf{V}}_i^{(j)})-\\
&\lambda_{\text{max}}(\hat{\mathbf{V}}^{(j)}_i)\!-\!(\hat{\mathbf{v}}_{\text{max},i}^{(j)})^H(\hat{\mathbf{V}}_i^{(j)}\!-\!\hat{\mathbf{V}}^{(j)}_i)\hat{\mathbf{v}}_{\text{max},i}^{(j)})\\
\end{split}
\end{displaymath}
\begin{equation}
\begin{split}
&=t(\hat{\mathbf{V}}_t^{(j)}, \hat{\mathbf{V}}_r^{(j)})+\nu\sum\limits_{i}(\text{Tr}(\hat{\mathbf{V}}_i^{(j)})\!-\!\lambda_{\text{max}}(\hat{\mathbf{V}}^{(j)}_i)),\\
\end{split}
\end{equation}
where $(a1)$ holds according to the feature of convex function $\lambda_{\text{max}}(\hat{\mathbf{V}}_i)$, i.e., $\lambda_{\text{max}}(\hat{\mathbf{V}}_i^{(j+1)})\geq \lambda_{\text{max}}(\hat{\mathbf{V}}^{(j)}_i)\!+(\hat{\mathbf{v}}_{\text{max},i}^{(j)})^H(\hat{\mathbf{V}}_i^{(j+1)}\!-\!\hat{\mathbf{V}}^{(j)}_i)\hat{\mathbf{v}}_{\text{max},i}^{(j)}$,
$(a2)$ holds because that the optimal $\{\hat{\mathbf{V}}_i^{(j+1)}\}$ is generated by minimizing (18a). Therefore, the optimal objective function (18a) is non-increasing after each iteration. Integrated with the boundedness of optimal solution of (18), we derive that the proposed penalty function-based method is able to converge a stationary point after a finite number of iterations.
\subsection{CPU-cycle Frequency Optimization of AR Users}
Given $\{p_{k}^{(n)}, \Gamma_t^{(n)}, \Gamma_r^{(n)}, f_{s,k}^{(n)}\}$, (6) is transformed as the following local CPU-cycle frequency optimization subproblem
\begin{subequations}
\begin{align}
\underset{\{f_{l,k}\}}{\text{minimize }}\begin{split}& \underset{k\in\mathcal{K}}{\text{max}}~\{\frac{w_k}{f_{l,k}}\!+\!T_{c,k}(p_k^{(n)},\!\Gamma_{r}^{(n)},\!\Gamma_t^{(n)},\!y_k^{(n)})\!+\!T_{s,k}(f_{s,k}^{(n)})\}\end{split}\\
\text{s.t. }~\begin{split}& \kappa w_kf_{l,k}^2+E_{c,k}(p_k^{(n)},\Gamma_{r}^{(n)},\Gamma_t^{(n)},y_k^{(n)})\\
&\leq E_{k,\text{max}},\forall k\in\mathcal{K},\end{split}\\
& 0\leq f_{l,k}\leq f_{k,\text{max}}, \forall k\in\mathcal{K},
\end{align}
\end{subequations}
where $T_{c,k}(p_k^{(n)},\Gamma_{r}^{(n)},\Gamma_t^{(n)},y_k^{(n)})=\frac{\beta d_k^2}{R_k(p_k^{(n)},\Gamma_{r}^{(n)},\Gamma_t^{(n)},y_k^{(n)})}$ and $E_{c,k}(p_k^{(n)},\Gamma_{r}^{(n)},\Gamma_t^{(n)},y_k^{(n)})=\frac{\beta p_k^{(n)} d_k^2}{R_k(p_k^{(n)},\Gamma_{r}^{(n)},\Gamma_t^{(n)},y_k^{(n)})}$. In the following \emph{Theorem 3}, we derive the optimal local CPU-cycle frequency as a closed-form solution for reducing the computational complexity.

\begin{theorem}
The optimal CPU-cycle frequency of AR users is expressed as
\begin{equation}
\begin{aligned}
&f_{l,k}^\ast=\text{min}\{\sqrt{\frac{E_{k,\text{max}}-E_{c,k}(p_k^{(n)},\Gamma_{r}^{(n)},\Gamma_t^{(n)},y_k^{(n)})}{\kappa w_k}},f_{k,\text{max}}\}, \\
&\forall k\in\mathcal{K}.
\end{aligned}
\end{equation}

\end{theorem}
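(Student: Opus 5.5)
The plan is a monotonicity argument that decouples problem (20) across users. With $\{p_k^{(n)},\Gamma_t^{(n)},\Gamma_r^{(n)},f_{s,k}^{(n)},y_k^{(n)}\}$ fixed, the terms $T_{c,k}(p_k^{(n)},\Gamma_{r}^{(n)},\Gamma_t^{(n)},y_k^{(n)})$ and $T_{s,k}(f_{s,k}^{(n)})$ are constants. Hence the $k$-th term inside the max in (20a) depends only on $f_{l,k}$, through $w_k/f_{l,k}$, which is strictly decreasing on $f_{l,k}>0$. The constraints (20b) and (20c) are also separable: each involves only its own $f_{l,k}$. So the feasible set is a Cartesian product $\prod_k \mathcal{F}_k$. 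I will first rewrite each $\mathcal{F}_k$ as an interval.

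Since $\kappa w_k>0$, (20b) is equivalent to
\begin{equation*}
f_{l,k}^2\le \frac{E_{k,\text{max}}-E_{c,k}(p_k^{(n)},\Gamma_{r}^{(n)},\Gamma_t^{(n)},y_k^{(n)})}{\kappa w_k}.
\end{equation*}
Combined with $0\le f_{l,k}\le f_{k,\text{max}}$, this gives $\mathcal{F}_k=[0,\bar f_k]$, with $\bar f_k$ equal to the right-hand side of (21). This step requires $E_{k,\text{max}}-E_{c,k}\ge 0$. That holds because the incumbent power and coefficient blocks were produced by subproblems (11) and (18), which enforce the energy constraints (11d)/(15c) with the current local energy. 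The previous iterate $f_{l,k}^{(n)}$ therefore witnesses nonemptiness. If $E_{c,k}<E_{k,\text{max}}$ strictly, then $\bar f_k>0$ and the latency term is finite.

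The main step is to show that componentwise maximization is optimal for the min--max objective. Denote the objective by $\Phi(\{f_{l,k}\})=\max_k g_k(f_{l,k})$, where each $g_k$ is decreasing. For any feasible point, $g_k(f_{l,k})\ge g_k(\bar f_k)$ for every $k$, so $\Phi(\{f_{l,k}\})\ge\Phi(\{\bar f_k\})$. Because $\{\bar f_k\}$ is itself feasible, it is optimal, and this gives (21).

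The one subtlety, which I expect to be the only real obstacle, is that the max objective makes the optimizer non-unique. Users that do not attain the maximum could use lower frequencies without changing $\Phi$. I would therefore state that (21) is \emph{an} optimal solution, and that it is the unique one that is simultaneously optimal for every user's individual latency. The same componentwise argument gives this uniqueness.
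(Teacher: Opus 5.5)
Your proof is correct and follows essentially the same route as the paper: you solve (20b) for an upper bound on $f_{l,k}$, intersect it with (20c), and use that $w_k/f_{l,k}$ is decreasing, so the optimum sits at $\min\{\cdot, f_{k,\text{max}}\}$. The additions the paper's proof leaves implicit are all sound: the explicit domination argument for the max objective over a product feasible set, the nonemptiness check via the previous CPU-frequency iterate, and the remark that (21) is \emph{an} optimizer rather than the unique one.
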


\emph{Proof: }
After some conventional transformations, (20b) can be converted as
\begin{equation}
f_{l,k}\leq \sqrt{\frac{E_{k,\text{max}}-E_{c,k}(p_k^{(n)},\Gamma_{r}^{(n)},\Gamma_t^{(n)},y_k^{(n)})}{\kappa w_k}}, \forall k\in\mathcal{K}.
\end{equation}
Combining with (20c), we have
\begin{equation}
\begin{split}
&0\leq f_{l,k}\leq \text{min}\{\sqrt{\frac{E_{k,\text{max}}-E_{c,k}(p_k^{(n)},\Gamma_{r}^{(n)},\Gamma_t^{(n)},y_k^{(n)})}{\kappa w_k}},f_{k,\text{max}}\}, \\
&\forall k\in\mathcal{K}.
\end{split}
\end{equation}
As observed in (20a), the service latency of $k$-th AR user decreases with the increase of $f_{l,k}$. Therefore, the optimal CPU frequency of $k$-th AR user is obtained at the upper bound of $f_{l,k}$, as described in (21).
\subsection{Computation Resource Allocation of BS}
Given $\{f_{l,k}^{(n)}, p_{k}^{(n)}, \Gamma_t^{(n)}, \Gamma_r^{(n)}\}$, (6) is reduced to the BS's computation resource allocation subproblem
\begin{subequations}
\begin{align}
\underset{\{f_{s,k},t\}}{\text{minimize }}~&  t\\
\text{s.t. }~\begin{split}&T_{l,k}(f_{l,k}^{(n)})+T_{c,k}(p_k^{(n)},\Gamma_{r}^{(n)},\Gamma_t^{(n)},y_k^{(n)})\\
&+\frac{\beta d_k^2c_k}{f_{s,k}}\leq t, \forall k\in\mathcal{K},\end{split}\\
& \sum\limits_{k\in\mathcal{K}}f_{s,k}\leq F,\\
& f_{s,k}\geq 0, \forall k\in\mathcal{K}.
\end{align}
\end{subequations}

As observed, (24) is a convex problem due to the convexity of $1/f_{s,k}$. In addition, its feasible set is non-empty. Therefore, we adopt the Lagrange duality method and Karush-Kuhn-Tucker (KKT) conditions to analyze the optimal solution of (24). The Lagrange function of (24) is expressed as
\begin{equation}
\begin{aligned}
&L(f_{s,k},t,a_{1,k},a_2)=t+\sum\limits_{k\in\mathcal{K}}a_{1,k}(T_{l,k}(f_{l,k}^{(n)})+\\
&T_{c,k}(p_k^{(n)},\Gamma_{r}^{(n)},\Gamma_t^{(n)},y_k^{(n)})\!+\frac{\beta d_k^2c_k}{f_{s,k}}\!-t)\!+a_2(\sum\limits_{k\in\mathcal{K}}f_{s,k}\!-F),
\end{aligned}
\end{equation}
where $a_{1,k}$ and $a_2$ denote the non-negative Lagrange parameters associated with (24b) and (24c), respectively. The partial KKT conditions of (24) are given by
\begin{subequations}
\begin{align}
&\frac{\partial L(f_{s,k},t,a_{1,k},a_2)}{\partial f_{s,k}}=-a_{1,k}^\ast\frac{\beta d_k^2c_k}{(f_{s,k}^\ast)^2}+a_2^\ast=0, \forall k\in\mathcal{K},\\
&\frac{\partial L(f_{s,k},t,a_{1,k},a_2)}{\partial t} =1-\sum\limits_{k\in\mathcal{K}}a_{1,k}^\ast=0,\\
\begin{split}&a_{1,k}^\ast(T_{l,k}(f_{l,k}^{(n)})+T_{c,k}(p_k^{(n)},\Gamma_{r}^{(n)},\Gamma_t^{(n)},y_k^{(n)})+\\
&\frac{\beta d_k^2c_k}{f_{s,k}^\ast}-t^\ast)=0, \forall k\in\mathcal{K},\end{split}\\
&a_2^\ast(\sum\limits_{k\in\mathcal{K}}f_{s,k}^\ast- F)=0,\\
&a_{1,k}^\ast\geq 0, a_2^\ast\geq 0, \forall k\in\mathcal{K},
\end{align}
\end{subequations}
where (26c)-(26d) represent the complementary slackness conditions. According to (26a)-(26e), we will analyze the structure of optimal solution of (24) in the following theorems to obtain interesting insights.

\begin{theorem}
For the computation resource allocation problem (24), the optimal solution should satisfy $\sum\limits_{k\in\mathcal{K}}f_{s,k}^\ast= F$.
\end{theorem}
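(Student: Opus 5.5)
The plan is to use the KKT conditions (26a)--(26e) and show that the multiplier $a_2^\ast$ is strictly positive. Complementary slackness (26d) then forces $\sum_{k\in\mathcal{K}}f_{s,k}^\ast=F$. Problem (24) is convex with a non-empty feasible set that has a strictly feasible point (take $f_{s,k}=F/K>0$ and $t$ large). Slater's condition therefore holds, and the KKT conditions are necessary and sufficient for optimality.

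First, from (26b) we have $\sum_{k\in\mathcal{K}}a_{1,k}^\ast=1$, so at least one index $k_0$ satisfies $a_{1,k_0}^\ast>0$. Next, every optimal $f_{s,k}^\ast$ must be strictly positive, since $\beta d_k^2c_k/f_{s,k}\to\infty$ as $f_{s,k}\to 0^+$ and (24b) would then be violated for any finite $t$. This makes the derivative condition (26a) well defined. Applying (26a) at $k_0$ gives
\begin{equation*}
a_2^\ast=a_{1,k_0}^\ast\frac{\beta d_{k_0}^2c_{k_0}}{(f_{s,k_0}^\ast)^2}>0,
\end{equation*}
because $\beta,d_{k_0},c_{k_0}>0$. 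Then (26d) yields $\sum_{k}f_{s,k}^\ast=F$.

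As a cross-check, and as an alternative that does not rely on multipliers, I would also give a direct contradiction argument. Suppose $\sum_k f_{s,k}^\ast<F$, and set $\epsilon=F-\sum_k f_{s,k}^\ast>0$. Increase every $f_{s,k}$ by $\epsilon/K$. Each left-hand side of (24b) then strictly decreases, so all constraints become strict. The value $t$ can then be lowered to the new maximum latency, which is strictly below $t^\ast$. This contradicts optimality.

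The only delicate point is justifying $f_{s,k}^\ast>0$ and the use of KKT at a point where $1/f_{s,k}$ is defined, which the blow-up argument above handles. Everything else is immediate.
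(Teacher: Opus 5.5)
Your proposal is correct and follows essentially the same route as the paper: you use (26b) to obtain a positive $a_{1,k}^\ast$, deduce $a_2^\ast>0$ from (26a), and conclude via complementary slackness (26d). The only difference is that you need just one index $k_0$ with $a_{1,k_0}^\ast>0$, whereas the paper argues by contradiction that $a_{1,k}^\ast>0$ for every $k$, a stronger fact it reuses for Theorem 5. Your Slater-condition justification, the argument that $f_{s,k}^\ast>0$, and the direct perturbation cross-check are all valid and fill in points the paper leaves implicit.
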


\emph{Proof: }
We first prove $a_{1,k}^\ast>0, \forall k\in\mathcal{K}$, by contradiction. Assuming that $\{a_{1,k}^\ast, a_2^\ast,f_{s,k}^\ast,t^\ast\}$ is the optimal solution of (26), and there exists $a_{1,m}^\ast=0, m\in\mathcal{K}$. According to (26a), it yields that $a_{1,m}^\ast\frac{\beta d_m^2c_m}{(f_{s,m}^\ast)^2}=0=a_2^\ast=a_{1,k}^\ast\frac{\beta d_k^2c_k}{(f_{s,k}^\ast)^2}, \forall k\in\mathcal{K}$. Therefore, we have $a_{1,k}^\ast=0, \forall k\in\mathcal{K}$, which is contradicted with $1-\sum\limits_{k\in\mathcal{K}}a_{1,k}^\ast=0$. Hence, we can derive that $a_{1,k}^\ast>0, \forall k\in\mathcal{K}$, and $a_2^\ast=a_{1,k}^\ast\frac{\beta d_k^2c_k}{(f_{s,k}^\ast)^2}>0$. As observed in (26d), we have $\sum\limits_{k\in\mathcal{K}}f_{s,k}^\ast= F$.

\begin{theorem}
For the optimal solution of computation resource allocation problem (24), each AR user will enjoy the same service latency, i.e.,
\begin{equation}
\begin{aligned}
&T_{l,k}(f_{l,k}^{(n)})+T_{c,k}(p_k^{(n)},\Gamma_{r}^{(n)},\Gamma_t^{(n)},y_k^{(n)})+T_{s,k}(f_{s,k}^{\ast})=\\
&T_{l,m}(f_{l,m}^{(n)})+T_{c,m}(p_m^{(n)},\Gamma_{r}^{(n)},\Gamma_t^{(n)},y_m^{(n)})+T_{s,m}(f_{s,m}^\ast), \\
&\forall k,m\in\mathcal{K}, k\neq m.
\end{aligned}
\end{equation}
\end{theorem}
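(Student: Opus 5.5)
The plan is to read the claim directly off the KKT system (26), reusing the intermediate fact from the proof of Theorem 4. Problem (24) is convex and its feasible set has a strictly feasible point: take $f_{s,k}=F/(2K)$ and $t$ large. Slater's condition therefore holds, and (26a)–(26e) are necessary and sufficient for optimality. The goal is to show that every latency constraint (24b) is active at the optimum. Once that is shown, each user's latency $T_{l,k}+T_{c,k}+T_{s,k}(f_{s,k}^\ast)$ equals the common value $t^\ast$, which is exactly (27).

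The key step is the strict positivity of all multipliers $a_{1,k}^\ast$, which was already established in the proof of Theorem 4. Suppose $a_{1,m}^\ast=0$ for some $m$. Then (26a) forces $a_2^\ast=0$, which in turn forces $a_{1,k}^\ast\frac{\beta d_k^2c_k}{(f_{s,k}^\ast)^2}=0$ for all $k$. Since $\beta d_k^2c_k>0$, this gives $a_{1,k}^\ast=0$ for every $k$, contradicting (26b). Hence $a_{1,k}^\ast>0$ for all $k\in\mathcal{K}$. The complementary slackness condition (26c) then gives
\begin{equation*}
T_{l,k}(f_{l,k}^{(n)})+T_{c,k}(p_k^{(n)},\Gamma_{r}^{(n)},\Gamma_t^{(n)},y_k^{(n)})+\frac{\beta d_k^2c_k}{f_{s,k}^\ast}=t^\ast,\quad \forall k\in\mathcal{K}.
\end{equation*}
Equating this expression for any pair $k\neq m$ yields (27).

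The main point needing care is that $f_{s,k}^\ast>0$ for every $k$, so that (26a) is well defined. This follows from (24b): if $f_{s,k}\to 0$, the term $\beta d_k^2c_k/f_{s,k}$ diverges, so any finite $t^\ast$ requires $f_{s,k}^\ast>0$. The positive lower bound on each $f_{s,k}$ thus means the constraint (24d) is inactive at the optimum and contributes no multiplier.

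As a sanity check, there is also a direct perturbation argument. Suppose some user's latency were strictly below $t^\ast$. One could then shift a small amount of computation resource from that user to the users attaining the maximum. Because each latency is strictly decreasing in $f_{s,k}$ and the total $F$ is fixed, this strictly reduces the maximum whenever the set of maximizers is handled jointly, contradicting optimality. Either route gives the result, and I would present the KKT route since it reuses Theorem 4 verbatim.
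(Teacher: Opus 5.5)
Your proposal is correct and follows the paper's proof: reuse $a_{1,k}^\ast>0$ from Theorem 4 and apply complementary slackness (26c) to get every latency equal to $t^\ast$. Your added checks (Slater's condition, and $f_{s,k}^\ast>0$ so that omitting a multiplier for (24d) is justified) fill in details the paper leaves implicit.
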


\emph{Proof}
According to the proof of Theorem 4, we have $a_{1,k}^\ast>0, \forall k\in\mathcal{K}$. Combined with (26c), we have $T_{l,k}(f_{l,k}^{(n)})+T_{c,k}(p_k^{(n)},\Gamma_{r}^{(n)},\Gamma_t^{(n)},y_k^{(n)})+\frac{\beta d_k^2c_k}{f_{s,k}^\ast}=t^\ast, \forall k\in\mathcal{K}$. Therefore, for the optimal solution of (24), each AR user has the same service latency $t^\ast$.

\emph{Remark 5: }As shown in (27), all the AR users has the same service latency under the optimal resource allocation strategy. Therefore, the proposed method is able to ensure the fairness of AR users in terms of service latency. In addition, the BS tends to allocate more computation resources to analyze (i.e., object detection) the image from the AR user with longer local processing and transmission latency.

\begin{theorem}
The optimal computation resource allocation strategy should satisfy
\begin{equation}
f_{s,k}^\ast=d_k\sqrt{\frac{a_{1,k}^\ast\beta c_k}{a_{2}^\ast}}, \forall k\in\mathcal{K}.
\end{equation}
\end{theorem}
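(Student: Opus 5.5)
The plan is to read the closed form directly off the stationarity condition (26a), using the sign information on the multipliers established in the proof of Theorem 4.

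First, I invoke Theorem 4 and its proof. There it was shown that $a_{1,k}^\ast>0$ for every $k\in\mathcal{K}$ and that $a_2^\ast=a_{1,k}^\ast\frac{\beta d_k^2c_k}{(f_{s,k}^\ast)^2}>0$. In particular, $a_2^\ast\neq 0$, so dividing by $a_2^\ast$ is legitimate.

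Before using (26a), I must make sure $f_{s,k}^\ast>0$, since the derivative of $\frac{\beta d_k^2c_k}{f_{s,k}}$ only exists there. This holds because $f_{s,k}=0$ would make the constraint (24b) infeasible for any finite $t$. Together with the nonempty feasible set and $F>0$, this forces the optimum to lie in the interior of (24d). Hence (24d) is inactive, its multiplier vanishes, and that is why it does not appear in (25)–(26).

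With positivity in hand, I rearrange (26a) as
\begin{equation*}
(f_{s,k}^\ast)^2=\frac{a_{1,k}^\ast\beta d_k^2c_k}{a_2^\ast}.
\end{equation*}
Taking the positive square root and pulling out $d_k>0$ gives $f_{s,k}^\ast=d_k\sqrt{a_{1,k}^\ast\beta c_k/a_2^\ast}$, which is (28).

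Optionally, I can add a remark that pins down the multipliers. Substituting (28) into $\sum_k f_{s,k}^\ast=F$ (Theorem 4) and using $\sum_k a_{1,k}^\ast=1$ from (26b) fixes $a_2^\ast$ given the $a_{1,k}^\ast$. The equal-latency property of Theorem 5 then determines the $a_{1,k}^\ast$.

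The only real subtlety is justifying $f_{s,k}^\ast>0$ and $a_2^\ast>0$ so that the division and square root are valid; everything else is algebra.
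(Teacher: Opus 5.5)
Your proposal is correct and takes the same approach as the paper, which simply rearranges the stationarity condition (26a) to obtain (28). Your extra checks, that $a_2^\ast>0$ (from the proof of Theorem 4) and $f_{s,k}^\ast>0$ (since $f_{s,k}=0$ would make (24b) infeasible), make explicit what the paper assumes implicitly when it divides by $a_2^\ast$ and takes the positive square root.
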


\emph{Proof}
Based on (26a), it yields that
\begin{equation}
\begin{aligned}
-a_{1,k}^\ast\frac{\beta d_k^2c_k}{(f_{s,k}^\ast)^2}+a_2^\ast=0, \forall k\in\mathcal{K}.
\end{aligned}
\end{equation}
Therefore, we can derive the optimal computation resource assignment strategy as given in (28).

\emph{Remark 6: }As shown in (28), the optimal computation resource assigned to $k$-th AR user, namely $f_{s,k}^\ast$, increases with the image size $d_k$ and the computational complexity $c_k$. This implies that the BS tends to allocate higher computing frequency to the AR user with higher computation requirements. Moreover, we also find that, as compared to the computational complexity $c_k$, the image size $d_k$ exhibits a larger impact on the computation resource strategy of the BS, i.e., $f_{s,k}^\ast\sim d_k$ and $f_{s,k}^\ast\sim \sqrt{c_k}$.
\subsection{Updation of Auxiliary Variable $y_k$}
Given $\{p_k^{(n)}, \Gamma_{t}^{(n)}, \Gamma_{r}^{(n)}, f_{l,k}^{(n)}, f_{s,k}^{(n)}\}$, the auxiliary variable $y_k$ is updated according to \emph{Lemma 1} as below
\begin{equation}
y_k^{(n)}=\frac{1}{\sum\limits_{j\neq k}p_j^{(n)}|h_{d,j}+\mathbf{h}_r^H\Gamma_{i(j)}^{(n)} \mathbf{h}_{I,j}|^2+B\delta^2}, \forall k\in\mathcal{K}.
\end{equation}
\subsection{Algorithm, Convergence, and Complexity}
As described above, the proposed method is summarized at the following Algorithm 2. Besides, convergence and computational complexity are analyzed as follows.
\begin{algorithm}[htbp]
\caption{Proposed method for service latency minimization problem (5)}
\textbf{Initialize:} Setting $\{p_k^{(0)}, \hat{\mathbf{V}}_{t}^{(0)}, \hat{\mathbf{V}}_{r}^{(0)}, f_{l,k}^{(0)}, f_{s,k}^{(0)}, y_{k}^{(0)}\}$, and $n=1$.\\
\textbf{Repeat:}\\
\quad Given $\{\hat{\mathbf{V}}_{t}^{(n-1)}, \hat{\mathbf{V}}_{r}^{(n-1)}, f_{l,k}^{(n-1)}, f_{s,k}^{(n-1)}, y_{k}^{(n-1)}\}$, calculating the optimal power control strategy $\{p_k^{(n)}\}$ by solving (11);\\
\quad Given $\{p_k^{(n)}, f_{l,k}^{(n-1)}, f_{s,k}^{(n-1)}, y_{k}^{(n-1)}\}$, executing Algorithm 1 to acquire the optimal coefficient matrixes $\{\hat{\mathbf{V}}_t^{(n)},\hat{\mathbf{V}}_r^{(n)}\}$;\\
\quad Given $\{p_k^{(n)}, \hat{\mathbf{V}}_{t}^{(n)}, \hat{\mathbf{V}}_{r}^{(n)},  f_{s,k}^{(n-1)}, y_{k}^{(n-1)}\}$, obtaining the optimal local computing frequencies $\{f_{l,k}^{(n)}\}$ via (21);\\
\quad Given $\{p_k^{(n)}, \hat{\mathbf{V}}_{t}^{(n)}, \hat{\mathbf{V}}_{r}^{(n)}, f_{l,k}^{(n)}, y_{k}^{(n-1)}\}$, acquiring the optimal computation resource allocation strategy $\{f_{s,k}^{(n)}\}$ by solving (24);\\
\quad Given $\{p_k^{(n)}, \hat{\mathbf{V}}_{t}^{(n)}, \hat{\mathbf{V}}_{r}^{(n)}, f_{l,k}^{(n)}, f_{s,k}^{(n)}\}$, computing $y_k^{(n)}=\frac{1}{\sum\limits_{j\neq k}p_j^{(n)}|h_{d,j}+\mathbf{h}_r^H\Gamma_{i(j)}^{(n)} \mathbf{h}_{I,j}|^2+B\delta^2}, \forall k\in\mathcal{K}$;\\
\quad Updating iteration factor $n=n+1$;\\
\textbf{Until} convergence.\\
\textbf{Return:} the optimal solution $\{p_k^{\ast},\hat{\mathbf{V}} _t^{\ast}, \hat{\mathbf{V}}_r^\ast, f_{l,k}^\ast,  f_{s,k}^{\ast}\}$.
\end{algorithm}

\begin{theorem}
The proposed alternating optimization method is able to converge to the optimal solution within a finite number of iterations.
\end{theorem}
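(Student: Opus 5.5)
The plan is the standard monotonicity-plus-boundedness argument for block coordinate descent, in the same spirit as the proof of Theorem 2.

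\textbf{Surrogate objective.} Define a surrogate that depends on all blocks together with the auxiliary variables:
\[
\Phi\big(\{p_k\},\hat{\mathbf{V}}_t,\hat{\mathbf{V}}_r,\{f_{l,k}\},\{f_{s,k}\},\{y_k\}\big)=\max_{k\in\mathcal{K}}\Big\{\tfrac{w_k}{f_{l,k}}+\tfrac{\beta d_k^2}{R_k(\cdot,y_k)}+\tfrac{\beta d_k^2c_k}{f_{s,k}}\Big\}.
\]
Here $R_k(\cdot,y_k)$ is the Lemma 1 lower bound (8). By Lemma 1, $R_k(\cdot,y_k)\le R_k$ for every $y_k>0$. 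Equality holds when $y_k$ is given by (30). Hence $\Phi$ upper-bounds the true objective $t$ of (6) and is tight after each $y$-update.

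\textbf{Step 1: monotonicity within an outer iteration.} I would chain the per-block inequalities
\begin{align*}
&\Phi(p^{(n-1)},\hat{\mathbf{V}}^{(n-1)},f_l^{(n-1)},f_s^{(n-1)},y^{(n-1)})
\ge\Phi(p^{(n)},\hat{\mathbf{V}}^{(n-1)},f_l^{(n-1)},f_s^{(n-1)},y^{(n-1)})\\
&\ge\Phi(p^{(n)},\hat{\mathbf{V}}^{(n)},f_l^{(n-1)},f_s^{(n-1)},y^{(n-1)})
\ge\Phi(p^{(n)},\hat{\mathbf{V}}^{(n)},f_l^{(n)},f_s^{(n-1)},y^{(n-1)})\\
&\ge\Phi(p^{(n)},\hat{\mathbf{V}}^{(n)},f_l^{(n)},f_s^{(n)},y^{(n-1)})
\ge\Phi(p^{(n)},\hat{\mathbf{V}}^{(n)},f_l^{(n)},f_s^{(n)},y^{(n)}).
\end{align*}
Each inequality is justified as follows.
\begin{itemize}
\item The power step holds because (11) is convex (Theorem 1) and is solved to global optimality, with the previous iterate feasible.
\item The coefficient step holds because Algorithm 1 is initialized at $\hat{\mathbf{V}}^{(n-1)}$ and, by (19), its penalized objective is non-increasing. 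When started from a rank-one point the penalty term is zero, so the returned point has no larger $t$.
\item The CPU-frequency step holds by the closed form of Theorem 3.
\item The BS-computation step holds by global optimality of the convex problem (24).
\item The last inequality holds because the $y$-update maximizes each $R_k(\cdot,y_k)$ by Lemma 1, which can only decrease each latency term.
\end{itemize}

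\textbf{Step 2: feasibility.} Each update must preserve feasibility of the energy constraint C5. I would check that the previous iterate is feasible in each subproblem. The delicate case is the $y$-update: it enlarges $R_k$ and therefore decreases $E_{c,k}=\beta p_kd_k^2/R_k$, so C5 remains satisfied.

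\textbf{Step 3: boundedness and finite termination.} The objective is bounded below by zero, since all latency terms are positive. Together with Step 1, the sequence $\{t^{(n)}\}$ is monotonically non-increasing and bounded, so it converges. With a stopping tolerance $\epsilon$ on the decrease, the algorithm therefore terminates in finitely many iterations. Compactness of the feasible set (C1–C8 are bounded, and $\hat{\mathbf{V}}_i$ has bounded diagonal) gives a convergent subsequence of iterates. Its limit is a fixed point of the block updates, which I would argue is a stationary (KKT) point of (6).

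\textbf{Expected obstacle.} The main difficulty is the RIS block. Algorithm 1 only guarantees that the \emph{penalized} objective decreases, and only approximately yields a rank-one solution. To get monotonicity of $t$ itself, I would start Algorithm 1 at the rank-one $\hat{\mathbf{V}}^{(n-1)}$ and take $\nu$ large enough that the output remains rank-one up to tolerance. The claimed ``optimal solution'' should be read as convergence to a stationary point; global optimality cannot be asserted for this nonconvex problem, so I would state the result in that weaker sense.
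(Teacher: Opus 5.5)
Steps~1--3 are the argument the paper relies on. Its proof of this theorem only defers to the monotone-and-bounded analysis of the cited reference, which is the reasoning it spells out for Theorem~2. Your version is more careful than the paper's in three places:
\begin{itemize}
\item you check that the Lemma~1 surrogate upper-bounds the true latency and becomes tight after the $y$-update;
\item you check that C5 survives every block update, including the $y$-update;
\item you warm-start Algorithm~1 at the previous iterate, so that (19) controls $t$ itself rather than only the penalized objective.
\end{itemize}
These steps correctly give a non-increasing, bounded objective sequence, and hence termination after finitely many iterations for any stopping tolerance. You are also right that global optimality cannot be claimed. The rank-one caveat can be dropped by tracking the merit function $\Phi+\nu\sum_i(\mathrm{Tr}(\hat{\mathbf{V}}_i)-\lambda_{\text{max}}(\hat{\mathbf{V}}_i))$ instead of $\Phi$. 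By (19) it is non-increasing in the RIS block, and the other blocks leave $\hat{\mathbf{V}}_i$ untouched, so they change it only through $\Phi$.

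What does not follow is your last claim: that a limit point of the iterates is a fixed point of the block updates and therefore a KKT point of (6). Monotonicity plus compactness gives convergence of the objective values, not convergence of the iterates to a stationary point. Passing to the limit in the block updates needs closed, single-valued update maps, i.e.\ unique block minimizers. Nothing guarantees uniqueness for (11) or for the lifted RIS subproblem. Moreover, Algorithm~1 does not return a block minimizer at all, only a stationary point of its penalized SCA.

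More fundamentally, even an exact blockwise minimizer need not be stationary here. In (6) all blocks are coupled through the shared epigraph variable $t$, which is equivalent to a nonsmooth max-objective. They are also coupled through C5, which links $p_k$, $\hat{\mathbf{V}}_{i(k)}$ and $f_{l,k}$. Cyclic block minimization in this setting (non-product feasible set, nonseparable nonsmooth objective) can stall at non-stationary points. For example, $|x-y|-(x+y)/2$ has a coordinatewise minimum at the origin, yet it decreases along $(1,1)$.

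So you have two options. You can add structure that rules this out, such as proximal terms to force unique block minimizers plus an argument that handles the coupled constraints. Or you can state the result as monotone convergence of the objective with finite termination. The second is all that the paper's own argument supports; its ``stationary point'' claim in Theorem~2 makes the same leap.
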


\emph{Proof: }
The proof is similar to the convergence analysis in \cite{10360849}.

Next, we analyze the computational complexity of Algorithm 2. Defining $N_1$ and $N_2$ as the iteration numbers of Algorithm 1 and Algorithm 2, respectively. In each iteration of Algorithm 2, four subproblems are alternately solved with the computational complexity given as follows.
\begin{itemize}
\item To obtain the optimal transmit power of AR users, the interior-point method is utilized to solve the convex problem (11) with $K+1$ variables and $4K$ constraints. The computational complexity is expressed as $\mathcal{O}(2(5K+1)(K+1)^2\sqrt{K}\log(\frac{1}{\epsilon_1}))$, where $\epsilon_1$ is the tolerance factor.
\item To acquire the optimal transmit/reflect coefficient matrix, Algorithm 1 is adopted to solve the non-convex problem (13). In each iteration, Algorithm 1 needs to iteratively solve convex problem (18) with $2(N+1)^2+1$ variables and $5N+2K+4$ constraints. The corresponding computational complexity is expressed as $\mathcal{O}((2(N+1)^2+5N+2K+5)(2(N+1)^2+1)^2\sqrt{5N+2K+4}\log(\frac{1}{\epsilon_2}))$ with $\epsilon_2$ be the tolerance factor. Combined with the iteration number $N_1$, the computational complexity of Algorithm 1 is $\mathcal{O}(N_1(2(N+1)^2+5N+2K+5)(2(N+1)^2+1)^2\sqrt{5N+2K+4}\log(\frac{1}{\epsilon_2}))$.
\item To obtain the optimal CPU frequencies of AR users, we derive the optimal solution in closed-form expressions as given in (21). Therefore, the computational complexity is $\mathcal{O}(K)$.
\item To acquire the optimal computation resource allocation strategy, we need to solve the convex problem (24) with $K+1$ variables and $2K+1$ constraints. Thus, the computational complexity is expressed as $\mathcal{O}((3K+2)(K+1)^2\sqrt{2K+1}\log(\frac{1}{\epsilon_3}))$ with tolerance factor $\epsilon_3$.
\end{itemize}
Moreover, the complexity for updating auxiliary variable $y_k$ is given by $\mathcal{O}(K)$. Therefore, the total computational complexity of Algorithm 2 is expressed as $\mathcal{O}(N_2(2(5K+1)(K+1)^2\sqrt{K}\log(\frac{1}{\epsilon_1})+N_1(2(N+1)^2+5N+2K+5)(2(N+1)^2+1)^2\sqrt{5N+2K+4}\log(\frac{1}{\epsilon_2})+2K+(3K+2)(K+1)^2\sqrt{2K+1}\log(\frac{1}{\epsilon_3})))$.

According to above descriptions, Algorithms 2 is proposed to solve the non-convex service latency minimization problem with polynomial complexity order. In contrast, defining $G_v$ as the candidate value size for each optimization variable, the complexity of exhaustive search algorithm is $\mathcal{O}(G_v^{3K+2M})$, which exponentially increases as $K$ and $M$.  Moreover, the particle swarm optimization algorithm can be adjusted to solve the non-convex optimization problem with the complexity of $\mathcal{O}(I_{t}I_{s}I_{f})$, where $I_{t}$ denotes the iteration number, $I_{s}$ represents the swarm size, and $I_f$ stands for the complexity in evaluating the fitness value in formulated optimization problems. Denoting $O_v$ as the size of optimization variables for formulated problems, we have $I_f\sim O_v^{3.5}$, which is close to the per-iteration complexity of proposed method. In addition, the swarm size $I_{s}$ and iteration number $I_{t}$ increase with the search space dimension and problem complexity remarkably. Considering the high-dimensional optimization variables, the proposed method has lower computational complexity than the particle swarm optimization-based algorithms.
\section{Service Latency Minimization Problem for FDMA-based Systems}
In this section, we investigate the service latency minimization problem for FDMA-based Metaverse systems. For FDMA-based systems, the transmission rate of $k$-th AR user is given by
\begin{equation}
\begin{aligned}
&R_{k}^{\text{FDMA}}\!=\!b_kB\log_2\left(1\!+\frac{p_k|h_{d,k}\!+\mathbf{h}_r^H\Gamma_{i(k)} \mathbf{h}_{I,k}|^2}{b_kB\delta^2}\right), \\
&\forall k\in\mathcal{K},
\end{aligned}
\end{equation}
where $b_k\in[0,1]$ indicates the bandwidth allocation ratio for $k$-th AR user. Similar to the method proposed in Section III, the service latency minimization problem for FDMA-based systems can be  addressed by alternately solving several subproblems, including transmit power and bandwidth allocation subproblem, transmit/reflect coefficient matrix optimization subproblem, CPU-cycle frequency optimization subproblem, and computation resource allocation subproblem. In addition to the transmit power and bandwidth allocation subproblem, the other subproblems can be solved by the same methods in Section III, through replacing $R_k$ as $R_{k}^{\text{FDMA}}$. Under given $\{f_{l,k}^{(n)}, f_{s,k}^{(n)}, \Gamma_t^{(n)}, \Gamma_r^{(n)}\}$, the transmit power and bandwidth allocation subproblem can be expressed as
\begin{subequations}
\begin{align}
\underset{\{b_k, p_{k}, t\}}{\text{minimize }}~~~& t\\
\text{s.t. }~~~~\begin{split}&T_{l,k}(f_{l,k}^{(n)})+\frac{\beta d_k^2}{R_k^{\text{FDMA}}(b_k,p_k,\Gamma_{r}^{(n)},\Gamma_t^{(n)})} \\
&+T_{s,k}(f_{s,k}^{(n)})\leq t, \forall k\in\mathcal{K},\end{split}\\
& 0\leq p_k\leq p_{k,\text{max}}, \forall k\in\mathcal{K},\\
\begin{split}& E_{l,k}(f_{l,k}^{(n)})+\frac{\beta p_kd_k^2}{R_k^{\text{FDMA}}(b_k,p_k,\Gamma_{r}^{(n)},\Gamma_t^{(n)})}\\
&\leq E_{k,\text{max}}, \forall k\in\mathcal{K},\end{split}\\
& \sum\limits_{k\in\mathcal{K}} b_k \leq 1,\\
& b_k\geq 0, \forall k\in\mathcal{K},
\end{align}
\end{subequations}
where $T_{l,k}(f_{l,k}^{(n)})=\frac{w_k}{f_{l,k}^{(n)}}$, $T_{s,k}(f_{s,k}^{(n)})=\frac{\beta d_k^2c_k}{f_{s,k}^{(n)}}$, $E_{l,k}(f_{l,k}^{(n)})=\kappa (f_{l,k}^{(n)})^2w_k$, and
\begin{equation}
\begin{aligned}
&R_k^{\text{FDMA}}(b_k,p_k,\Gamma_{r}^{(n)},\Gamma_t^{(n)})\!=\\
&b_kB\log_2\left(1\!+\frac{p_k|h_{d,k}\!+\mathbf{h}_r^H\Gamma_{i(k)}^{(n)} \mathbf{h}_{I,k}|^2}{b_kB\delta^2}\right) , \forall k\in\mathcal{K}.
\end{aligned}
\end{equation}
After basic inequality transformations, (32) can be converted as
\begin{subequations}
\begin{align}
\underset{\{b_k, p_{k}, t\}}{\text{minimize }}~& t\\
\text{s.t. }~\begin{split}&R_k^{\text{FDMA}}(b_k, p_k,\Gamma_{r}^{(n)},\Gamma_t^{(n)})\geq \\
&\frac{\beta d_k^2}{t-T_{l,k}(f_{l,k}^{(n)})-T_{s,k}(f_{s,k}^{(n)})}, \forall k\in\mathcal{K},\end{split}\\
\begin{split}& R_k^{\text{FDMA}}(b_k, p_k,\Gamma_{r}^{(n)},\Gamma_t^{(n)})\geq \\
&\frac{\beta p_kd_k^2}{E_{k,\text{max}}-E_{l,k}(f_{l,k}^{(n)})}, \forall k\in\mathcal{K},\end{split}\\
&\text{(32c), (32e)-(32f).}
\end{align}
\end{subequations}
Since $\log_2\left(1\!+\frac{p_k|h_{d,k}\!+\mathbf{h}_r^H\Gamma_{i(k)}^{(n)} \mathbf{h}_{I,k}|^2}{\delta^2}\right)$ is a typically concave function of $p_k$, its perspective function $R_k^{\text{FDMA}}(b_k, p_k,\Gamma_{r}^{(n)},\Gamma_t^{(n)},y_k^{(n)})$ with $b_k\geq 0$ is also concave associated with $b_k$ and $p_k$. Therefore, the transmit power and bandwidth allocation subproblem can be proved as a convex problem, which can be solved by classic convex optimization methods integrated on CVX.

\begin{algorithm}[htbp]
\caption{Proposed method for solving service latency minimization problem (31) of FDMA-based Metaverse systems}
\textbf{Initialize:} Setting $\{p_k^{(0)}, b_k^{(0)}, \hat{\mathbf{V}}_{t}^{(0)}, \hat{\mathbf{V}}_{r}^{(0)}, f_{l,k}^{(0)}, f_{s,k}^{(0)}\}$, and $n=1$.\\
\textbf{Repeat:}\\
\quad Given $\{\hat{\mathbf{V}}_{t}^{(n-1)}, \hat{\mathbf{V}}^{(n-1)}, f_{l,k}^{(n-1)}, f_{s,k}^{(n-1)}\}$, obtaining the optimal transmit power control and bandwidth allocation strategy $\{p_k^{(n)}, b_k^{(n)}\}$ by solving (34);\\
\quad Given $\{p_k^{(n)}, b_k^{(n)}\}$, utilizing the similar methods of Algorithm 2 to acquire the optimal $\{\hat{\mathbf{V}}_{t}^{(n)},\hat{\mathbf{V}}_{r}^{(n)},f_{l,k}^{(n)},f_{s,k}^{(n)}\}$, via replaying $R_k$ as $R_k^{\text{FDMA}}$;\\
\quad Updating iteration factor $n=n+1$;\\
\textbf{Until} convergence.\\
\textbf{Return:} the optimal solution $\{p_k^{\ast}, b_k^{\ast}, \hat{\mathbf{V}} _t^{\ast}, \hat{\mathbf{V}}_r^\ast, f_{l,k}^\ast,  f_{s,k}^{\ast}\}$.
\end{algorithm}

In Algorithm 3, we summarize the proposed method to solve the service latency minimization problem (31) for FDMA-based STAR-RIS-enhanced Metaverse systems. Then, we summarize the computational complexity of Algorithm 3. The interior-point method-based solver is exploited to solve the transmit power and bandwidth allocation subproblem with $2K+1$ variables and $5K+1$ constraints, so its computational complexity is given by $\mathcal{O}((2K+1)^2(7K+2)\sqrt{5K+1}\log(\frac{1}{\epsilon_4}))$, where $\epsilon_4$ indicates the tolerance factor. Moreover, the same methods in Section III are utilized to solve the other subproblems by replacing $R_k$ as $R_k^{\text{FDMA}}$. Defining $N_3$ as the iteration number of Algorithm 3, its total computational complexity is expressed as $\mathcal{O}(N_3((2K+1)^2(7K+2)\sqrt{5K+1}\log(\frac{1}{\epsilon_4})+N_1(2(N+1)^2+5N+2K+5)(2(N+1)^2+1)^2\sqrt{5N+2K+4}\log(\frac{1}{\epsilon_2})+(3K+2)(K+1)^2\sqrt{2K+1}\log(\frac{1}{\epsilon_3})+K))$.

It is worth noting that this article considers the scenario with single-antenna BS, and the proposed methods can be extended to the scenario with multi-antenna BS after some simple modifications. The similar methods in the single-antenna BS scenario can be utilized to obtain the optimal transmit power, bandwidth allocation ratios, CPU-cycle frequencies of AR users, transmit/reflect coefficient matrix of STAR-RIS, and computation resource allocation of BS. Meanwhile, the receive beamforming strategy can be achieved by exploiting the similar variable substitution technique and penalty function-based method of Section III-B.

\section{Numerical Results}
This section provides extensive numerical results to evaluate the achievable service latency of proposed method for STAR-RIS-assisted AR-empowered Metaverse systems, by comparing with the following several benchmark methods\footnote{In order to ensure the fairness in comparison, the simulation parameters of benchmark methods are same to our proposed methods.}:
\begin{itemize}
\item \emph{Reflecting RIS: }In this benchmark scheme, a conventional reflecting-only RIS is employed to assist the uplink data transmission of AR users located in the reflection region, i.e., $\gamma_{r,n}=1$ and $\gamma_{t,n}=0, \forall n\in\mathcal{N}$. To this end, the transmit power and local CPU-cycle frequency of AR users, the RIS reflection coefficient matrix, and the computation resource allocation at the BS are jointly optimized to minimize the service latency.
\item \emph{Transmitting RIS: }In this scheme, the RIS operates in a refracting-only mode to support the uplink data transmission of AR users located in the transmission region, i.e., $\gamma_{r,n}=0$ and $\gamma_{t,n}=1, \forall n\in\mathcal{N}$. The optimization framework jointly considers the transmit coefficient matrix and resource management strategy in order to reduce the service latency.
\item \emph{STAR-RIS with random phase: }In this scheme, the reflection and transmission amplitudes of each RIS element are fixed as $\gamma_{r,n}=\gamma_{t,n}=0.5, \forall n\in\mathcal{N}$, while the corresponding phase shifts are randomly drawn from the interval $[0, 2\pi]$. Meanwhile, the proposed optimization method is applied to jointly allocate communication and computing resources, aiming to minimize the maximum service latency across all AR users.
\end{itemize}

In the simulation, we consider a scenario with $K_r=5$ reflection AR users and $K_t=5$ transmission AR user, and the STAR-RIS is equipped with $N=16$ elements. Moreover, the coordinates of BS and STAR-RIS are set as [0, 10] and [10, 0],  respectively. In addition, AR users in the reflection region are uniformly distributed within a square area defined by the x-coordinate range [8, 10] and the y-coordinate range [0, 2], while those in the transmission region are uniformly distributed within a square area defined by the x-coordinate range [10, 12] and the y-coordinate range [0, 2]. Furthermore, the other simulation parameters are set as follows: $\rho=-30$ dB, $\iota_{l,k}=\iota_{r}=2$, $\iota_{d,k}=3$, $\varsigma=3$, $\kappa=10^{-27}$, $\nu=10^{-5}$, $\delta^2=10^{-13}$ W/Hz, $E_{k,\text{max}}=2$ Joule, $f_{k,\text{max}}=1$ GHz, $c_k=1000$, $w_k=5\times 10^7$ Cycles, and $\beta=25$ \cite{feng2022resource}.

\begin{figure}[htbp]
\begin{center}
\includegraphics[width=0.5\textwidth]{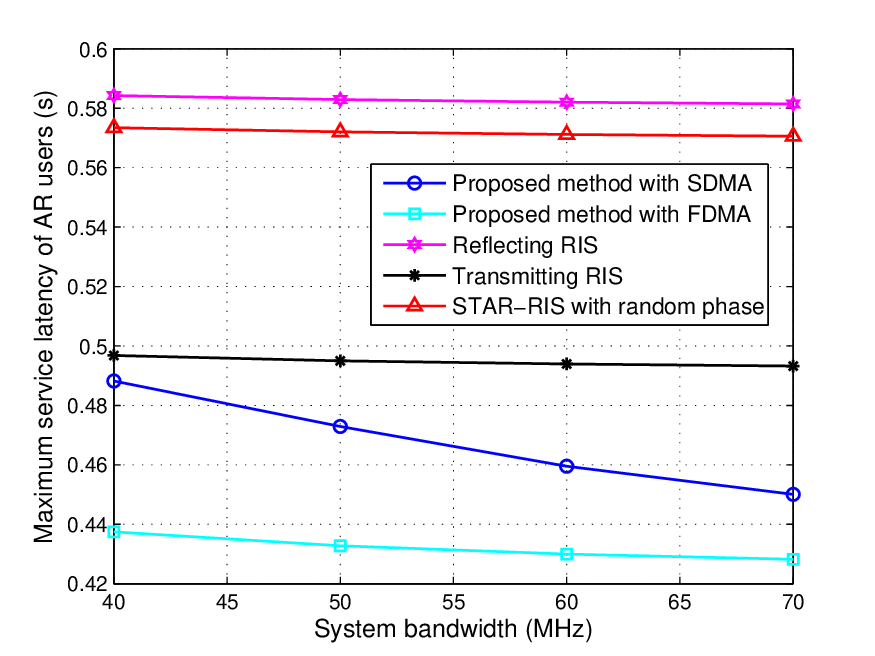}
\caption{Maximum service latency of AR users versus system bandwidth.}
\end{center}
\end{figure}

\begin{figure}[htbp]
	\begin{center}
		\includegraphics[width=0.5\textwidth]{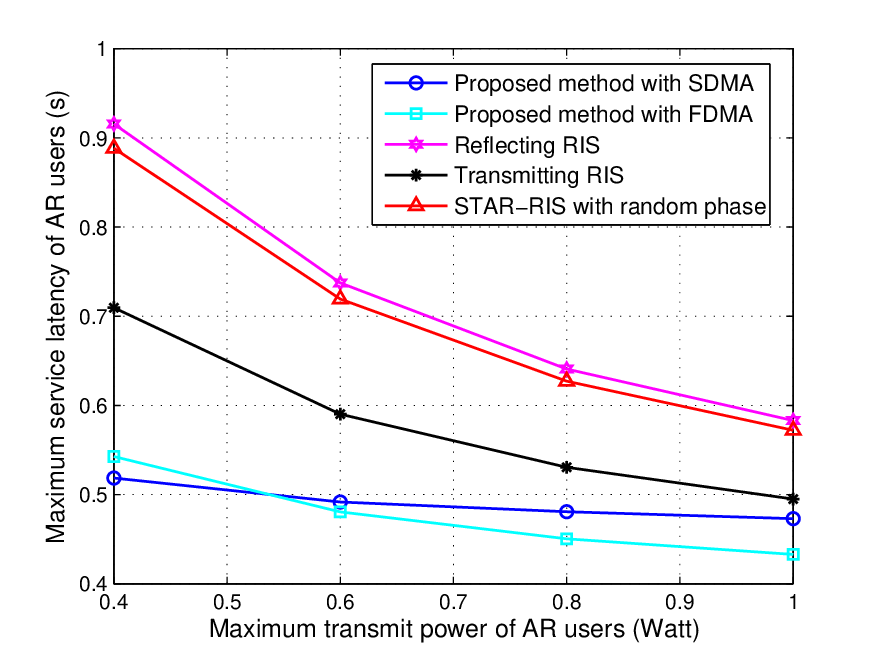}
		\caption{Maximum service latency of AR users versus maximum transmit power of AR users.}
	\end{center}
\end{figure}

In Fig.~2, we present the maximum service latency of AR users versus the system bandwidth $B$, where the maximum transmit power of AR users is set to $p_{k,\text{max}}=1$~W, the computation model size is $d_k=150$~pixels, the number of STAR-RIS elements is $N=16$, and the maximum computing frequency of the BS is set to $F=20$~GHz. As expected, the maximum service latency achieved by all schemes decreases as the system bandwidth increases. Moreover, the decreasing trend becomes less pronounced with larger bandwidth values, since the communication latency of AR users is inversely proportional to the transmission bandwidth. In addition, it is observed that the proposed method achieves substantial latency reduction compared with several benchmark schemes, particularly when the system bandwidth is small. This observation indicates that the optimal deployment and configuration of STAR-RIS is especially beneficial for AR-empowered Metaverse systems operating under limited communication resources. Furthermore, as shown in the figure, the FDMA-based scheme yields lower service latency than the SDMA-based scheme, although the performance gap narrows with increasing system bandwidth. This behavior can be explained by the harmful co-channel interference inherent in the SDMA-based scheme, which significantly degrades the communication latency between AR users and the BS. By contrast, the FDMA-based scheme completely eliminates such interference through orthogonal frequency allocation. Nevertheless, as the system bandwidth expands, both FDMA- and SDMA-based schemes achieve low communication latency, thereby reducing the relative performance advantage of FDMA.  

Fig.~3 illustrates the relationship between the maximum service latency of AR users and their maximum transmit power $p_{k,\text{max}}$, where $B=50$~MHz, $d_k=150$~pixels, $F=20$~GHz, and $N=16$. As anticipated, the maximum service latency monotonically decreases with increasing $p_{k,\text{max}}$, since higher transmit power improves the uplink transmission rate of AR users, thereby reducing latency. It is observed that when $p_{k,\text{max}}=0.2$~W, the SDMA-based scheme outperforms the FDMA-based scheme in terms of latency. In contrast, when $p_{k,\text{max}}\geq 0.4$~W, the FDMA-based scheme demonstrates superior performance over its SDMA-based counterpart. It reveals that SDMA is more advantageous in power-constrained scenarios, whereas FDMA exhibits superior performance when the AR users are equipped with relatively higher transmission power. Moreover, the proposed method consistently achieves significantly lower service latency compared with all benchmark schemes, particularly in the low transmit power regime. This is because with higher $p_{k,\text{max}}$, AR users can effectively mitigate path loss in uplink transmissions by increasing their transmit power, thereby reducing the relative performance gains provided by the optimal deployment and configuration of STAR-RIS.

\begin{figure}[htbp]
	\begin{center}
		\includegraphics[width=0.5\textwidth]{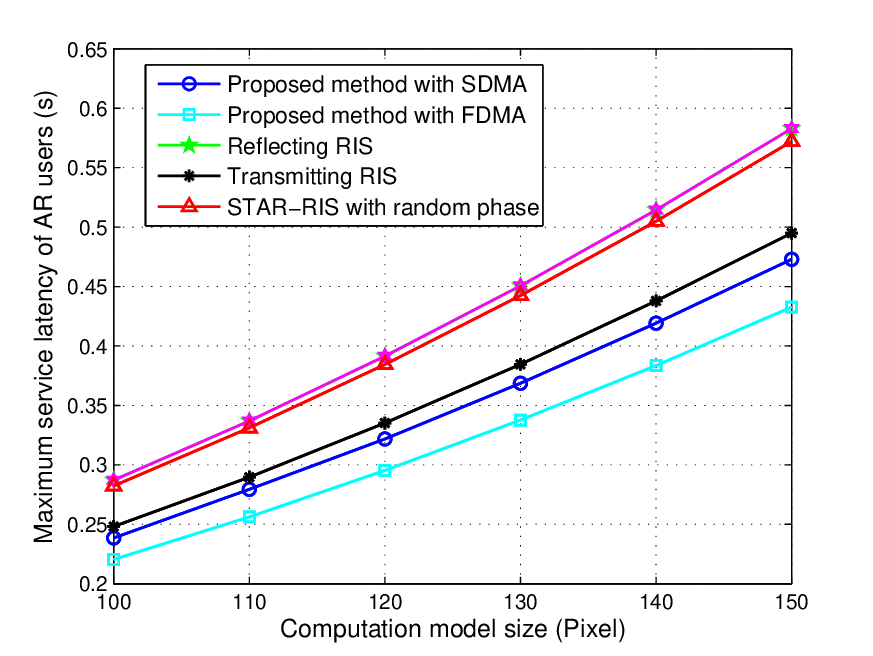}
		\caption{Maximum service latency of AR users versus computation model size.}
	\end{center}
\end{figure}
Fig. 4 plots the maximum service latency of AR users versus the computation model size $d_k$, where $B=50$ MHz, $p_{k,\text{max}}=1$ W, $F=20$ GHz, and $N=16$. As anticipated, the maximum service latency increases with $d_k$, since both transmission latency and edge processing latency inevitably grow with larger computation model sizes. A notable observation is that the latency gap between the proposed method and the benchmark schemes widens as $d_k$ increases, indicating that the optimal configuration of the STAR-RIS is particularly effective in mitigating transmission delay when the uplink data volume is large. Furthermore, the FDMA-based scheme consistently achieves lower service latency than its SDMA-based counterpart, with the performance gap becoming more pronounced as $d_k$ grows. This observation confirms that FDMA provides more substantial latency reduction under heavy communication demands, where severe co-channel interference significantly degrades the performance of SDMA-based schemes.

\begin{figure}[htbp]
	\begin{center}
		\includegraphics[width=0.5\textwidth]{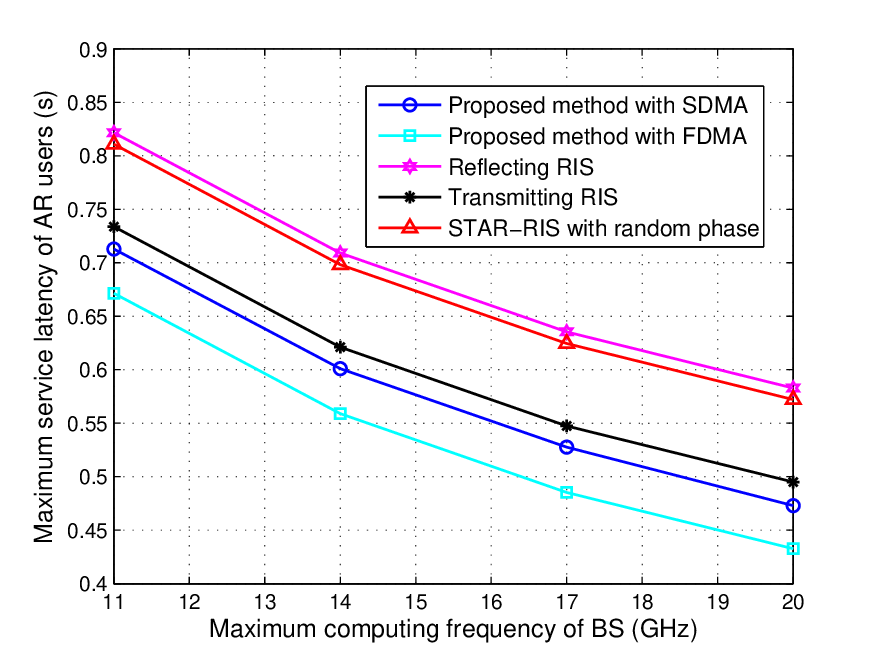}
		\caption{Maximum service latency of AR users versus maximum computing frequency of BS.}
	\end{center}
\end{figure}
Fig. 5 shows the maximum service latency of AR users against maximum computing frequency $F$ of the BS, where $B=50$ MHz, $p_{k,\text{max}}=1$ W,  $d_k=150$ pixels, and $N=16$. It can be observed from the figure that the maximum service latency among all AR users decreases as the maximum computing frequency of the BS increases, owing to the reduced object detection latency enabled by the enhanced computational capability at the BS. Furthermore, the proposed method achieves latency reductions of at least 18\%, 8\%, and 17\% compared to the \emph{Reflecting RIS} scheme, the \emph{Transmitting RIS} scheme, and the \emph{STAR-RIS with random phase} scheme, respectively.

\begin{figure}[htbp]
	\begin{center}
		\includegraphics[width=0.5\textwidth]{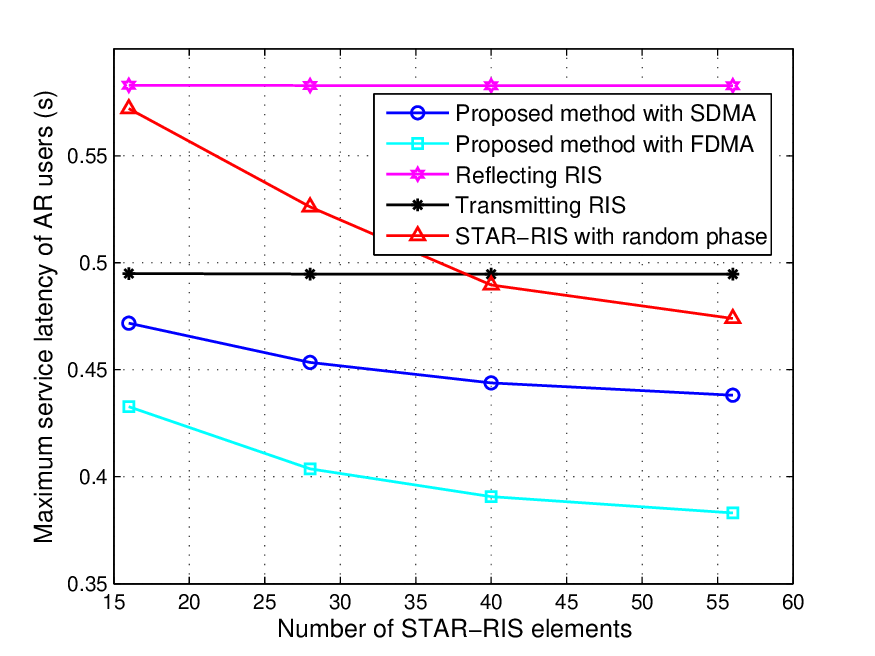}
		\caption{Maximum service latency of AR users versus number of STAR-RIS elements.}
	\end{center}
\end{figure}
Fig.~6 illustrates the maximum service latency of AR users versus the number of STAR-RIS elements $N$, where $B=50$ MHz, $p_{k,\text{max}}=1$ W,  $d_k=150$ pixels, and $F=20$ GHz. As $N$ increases, the service latency of AR users achieved by the proposed method decreases substantially, since the additional STAR-RIS elements provide extra spatial degrees of freedom to enhance the data transmission rate of AR users, thereby reducing the overall service latency. Moreover, the proposed method attains a more pronounced latency reduction compared to both the \emph{Reflecting RIS} scheme and the \emph{Transmitting RIS} scheme, as these benchmark schemes can only assist the uplink data transmission of AR users located in either the transmission or reflection region. Consequently, merely increasing the number of RIS elements is insufficient to effectively improve the worst-case service latency. In addition, it is observed that the FDMA-based method achieves significantly lower service latency than the SDMA-based counterpart, with the performance gap widening as the number of STAR-RIS elements increases. This is because FDMA avoids co-channel interference, allowing the performance improvement from additional STAR-RIS elements to be more fully realized. In contrast, the SDMA-based method still suffers from harmful co-channel interference, which limits the performance gain from increasing the number of STAR-RIS elements.

\begin{figure}[htbp]
	\begin{center}
		\includegraphics[width=0.5\textwidth]{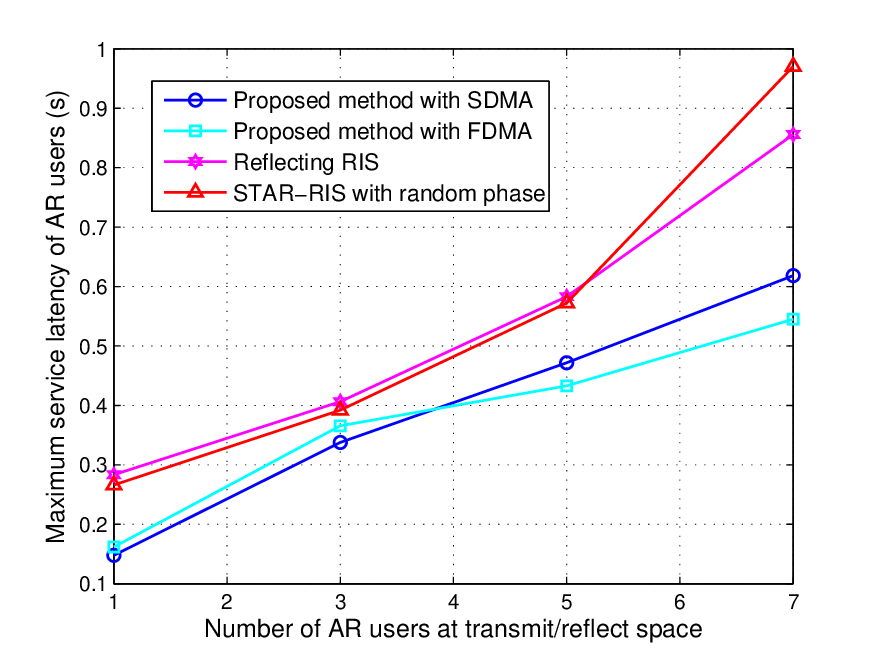}
		\caption{Maximum service latency of AR users versus number of AR users.}
	\end{center}
\end{figure}
Fig.~7 illustrates the impact of the number of AR users $K_t/K_r$ on the maximum service latency across all AR users, where $B=50$ MHz, $p_{k,\text{max}}=1$ W,  $d_k=150$ pixels, $F=20$ GHz, and $N=16$. As $K_t/K_r$ increases, both communication latency and edge processing latency inevitably rise, since the available communication bandwidth and edge computing resources allocated to each user become more constrained. Furthermore, the latency reduction achieved by the proposed method, relative to several benchmark schemes, becomes increasingly significant with a larger number of AR users. This is attributed to the fact that the communication requirements grow more stringent as the user population increases, wherein the optimal configuration of the STAR-RIS plays a pivotal role in alleviating uplink communication latency.

\begin{figure}[htbp]
\begin{center}
\includegraphics[width=0.5\textwidth]{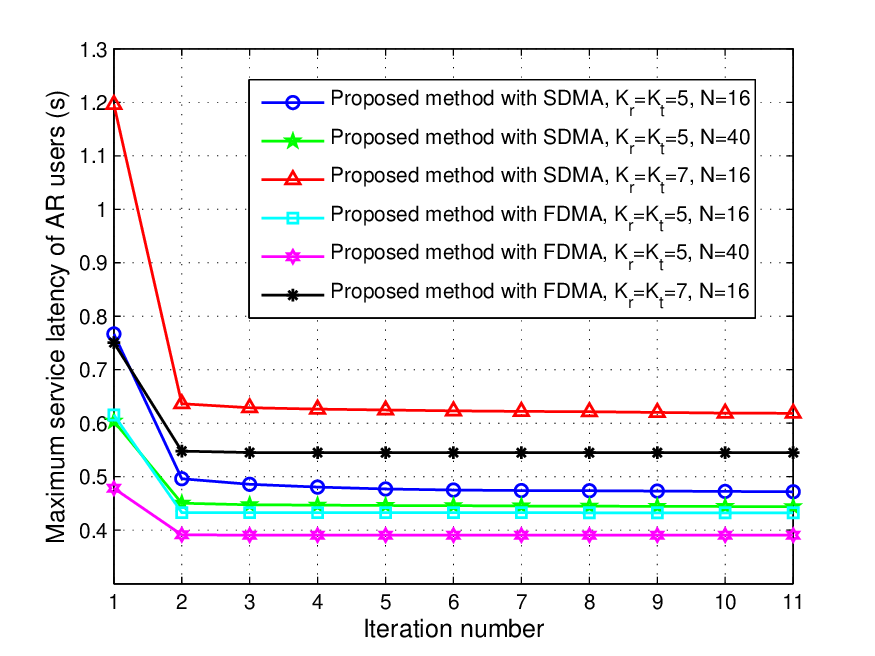}
\caption{Convergence of proposed algorithms.}
\end{center}
\end{figure}
In Fig. 8, we show the convergence of proposed alternating optimization methods, where $B=50$ MHz, $p_{k,\text{max}}=1$ W,  $d_k=150$ pixels, and $F=20$ GHz. As shown in this figure, the proposed alternating optimization methods tend to converge to a stationary point after several iterations under different simulation parameters. It further demonstrates the fast convergence speed of proposed methods.
\section{Conclusion}
This paper investigated the resource scheduling framework for STAR-RIS-assisted AR-enhanced Metaverse systems. The optimization framework can achieve the joint scheduling of the transmit power and CPU-cycle frequency of AR users, transmit/reflect coefficient matrix of the STAR-RIS, and computation resource allocation of Metaverse server located at the BS, for minimizing the service latency. Moreover, we developed an alternating optimization algorithm to solve the non-convex resource scheduling problem via rigorous derivation and analysis. Besides, the convergence and computationally complexity of proposed method were also presented. Numerical results revealed that our proposed STAR-RIS configuration and resource scheduling algorithm outperforms the existing benchmark methods in terms of maximum service latency among all AR users, especially in the scenario with restricted communication resources and heavy communication requirements. Moreover, our results indicate that SDMA serves as a suitable access scheme for STAR-RIS-assisted AR-empowered Metaverse systems under power-constrained or bandwidth-sufficient conditions, whereas FDMA demonstrates superior performance when the system bandwidth is restricted or when AR users are equipped with relatively higher transmission power.

Our work can be extended to several interesting research directions. Firstly, the general scenario with multiple BSs and STAR-RISs should be further investigated, with the joint optimization of user association control and resource management. Secondly, it is essential to study the robust optimization strategy to consider the imperfect CSI, since the presence of passive and massive STAR-RIS elements makes the channel estimation very challenging. Finally, this paper considers static users for analytical tractability, while future work will extend the proposed framework to incorporate user mobility, which is crucial for practical AR/Metaverse applications.
\bibliography{main}
\end{document}